\theoremstyle{definition}
\newtheorem{proposition}{Proposition}
\newtheorem{definition}{Definition}
\newtheorem{lemma}{Lemma}
\newtheorem{theorem}{Theorem}
\newtheorem{remark}{Remark}
\newcommand{\Black}[1]{\textcolor{black}{{#1}}}
\newcommand{\changed}[1]{\Black{#1}}
\DeclareMathOperator{\rank}{rank}
\DeclareMathOperator{\blkdiag}{blkdiag}
\def\th{\text{th}}
\def\Rbb{\mathbb{R}}
\def\Zbb{\mathbb{Z}}
\def\Ecal{\mathcal{E}}
\def\Gcal{\mathcal{G}}
\def\Lcal{\mathcal{L}}
\def\Ncal{\mathcal{N}}
\def\Ucal{\mathcal{U}}
\def\Vcal{\mathcal{V}}
\def\Xcal{\mathcal{X}}
\def\Ycal{\mathcal{Y}}
\def\Lfrak{\mathfrak{L}}
\def\Mfrak{\mathfrak{M}}
\def\Rfrak{\mathfrak{R}}
\def\Cfrak{\mathfrak{C}}
\def\ubf{\mathbf{u}}
\def\wbf{\mathbf{w}}
\def\xbf{\mathbf{x}}
\def\ybf{\mathbf{y}}
\def\Kbf{\mathbf{K}}
\def\Phibf{\mathbf{\Phi}}
\def\Psibf{\mathbf{\Psi}}
\def\Phivec{\vv{\Phibf}}
\def\Lambvec{\vv{\Lambda}}
\def\Z{Z_{AB}}
\def\Zi{\Z^\dagger}
\def\Fi{F^\dagger}
\def\Hi{H^\dagger}
\def\IOBlock{\begin{bmatrix} I \\ 0 \end{bmatrix}}
\def\Zb{\Z^\text{blk}}
\def\Zhb{Z_h^\text{blk}}
\def\nPhi{N_\Phi}
\def\nL{N_\Lfrak}
\def\nM{N_\Mfrak}
\title{\LARGE \bf Global Performance Guarantees for Localized Model Predictive Control}
\author{Jing Shuang (Lisa) Li and Carmen Amo Alonso
	\thanks{Corresponding author: J. S. Li (e-mail: {\tt\small jslisali@umich.edu})}
}
\begin{document}

\maketitle

\begin{abstract}

Recent advances in model predictive control (MPC) leverage local communication constraints to produce localized MPC algorithms whose complexities scale independently of total network size. However, no characterization is available regarding global performance, i.e. whether localized MPC (with communication constraints) performs just as well as global MPC (no communication constraints). In this paper, we provide analysis and guarantees on global performance of localized MPC --- in particular, we derive sufficient conditions for optimal global performance in the presence of local communication constraints. We also present an algorithm to determine the communication structure for a given system that will preserve performance while minimizing computational complexity. The effectiveness of the algorithm is verified in simulations, and additional relationships between network properties and performance-preserving communication constraints are characterized. A striking finding is that in a network of 121 coupled pendula, each node only needs to communicate with its immediate neighbors to preserve optimal global performance. Overall, this work offers theoretical understanding on the effect of local communication on global performance, and provides practitioners with the tools necessary to deploy localized model predictive control by establishing a rigorous method of selecting local communication constraints. This work also demonstrates --- surprisingly --- that the inclusion of severe communication constraints need not compromise global performance.

\end{abstract}
\section{INTRODUCTION} \label{sec:introduction}

Distributed control is crucial for the operation of large-scale networks such as power grids and intelligent transport systems. Distributed model predictive control (MPC) is of particular interest, since MPC is one of the most powerful and commonly used control methods. \changed{Many formulations for distributed MPC involve open-loop policies (i.e. optimize directly over states and inputs) \cite{venkat_distributed_2008,zheng_networked_2013,giselsson_accelerated_2013,conte_distributed_2016,jalal_limited-communication_2017,wang_distributed_2015, venkat2005stability,sturz_distributed_2020}. Distributed closed-loop approaches (i.e. optimize over policies) are rarer and typically require strong assumptions, such as the existence of a static structured stabilizing controller \cite{conte_robust_2013} or decoupled subsystems \cite{richards_robust_2007}. However, distributed closed-loop approaches tend to be more amenable for extension to robust formulations --- this motivated the development of the distributed and localized MPC (DLMPC) formulation, introduced in previous work \cite{amoalonso_dlmpc_cdc,amoalonso_dlmpc1_journal,amoalonso_dlmpc2_journal}.} DLMPC is unique among distributed MPC methods in that it computes structured closed-loop policies, can be solved at scale via distributed optimization, and requires no strong assumptions on the system: it may be used on arbitrary linear systems. In the context of system level synthesis, upon which DLMPC is based, DLMPC is also the first work to extend to system level formulation to the distributed online (i.e. predictive) setting. Additionally, DLMPC enjoys minimally conservative feasibility and stability guarantees \cite{amoalonso_dlmpc2_journal}. However, DLMPC requires the inclusion of local communication constraints, whose effects on performance is, thus far, unexplored --- this is the focus of this work.

The key benefits of DLMPC are facilitated by the inclusion of local communication constraints, which are typically encapsulated in a single parameter $d$ (rigorously defined in Section \ref{sec:sls}). The question of how to select this parameter remains unresolved, as two opposing forces come into play: smaller values of $d$ represent stricter communication constraints, which correspond to decreased complexity --- however, overly strict communication constraints may render the problem infeasible, or compromise system performance. In this work, we address this problem by providing a rigorous characterization of the impact of local communication constraints on performance.

The analysis in this work is enabled by the unique formulation of DLMPC with regards to communication constraints. While previous distributed MPC approaches typically 1) focus on iterative distributed solutions of the centralized problem \cite{venkat_distributed_2008, jalal_limited-communication_2017, giselsson_accelerated_2013, sturz_distributed_2020} or 2) reshape performance objectives for distributed optimization \cite{zheng_networked_2013, conte_distributed_2016}, DLMPC does neither. \changed{In DLMPC, we begin with a reparametrization of the standard centralized MPC problem, then introduce an additional constraint of local communication, which manifests as a sparsity constraint on the decision variables. No changes to the performance objective are required. The resulting \textit{localized MPC} problem can be exactly and optimally solved via distributed optimization --- that is, we can apply the alternating direction method of multipliers (ADMM) \cite{boyd2011distributed} to split the problem into subproblems, which are solved in parallel. For algorithm convergence, no assumptions are required on the underlying system graph, which is constructed solely using system matrices (see Definition \ref{defn:graph}). However, when the system graph is sparse (see Section \ref{sec:simulations} for an example), the local communication constraints confer additional scalability upon the distributed problem --- in particular, the size of subproblems do not depend on the total size of the network \cite{Anderson2019, amoalonso_dlmpc_cdc, amoalonso_dlmpc1_journal}}.

In this work, we rigorously analyze the effect of this local constraint by comparing the performance of the localized MPC problem to the standard global MPC problem. We restrict analysis to the linear setting, and focus on cases in which optimal global performance (with respect to any convex objective function with its minimum at the origin) may be obtained with local communication. In other words, the performance of the system is unchanged by the introduction of communication constraints. A striking finding is that in a network of coupled pendula, optimal global performance can be achieved with relatively strict local communication constraints --- in fact, if every subsystem is actuated, then each subsystem only needs to communicate with its immediate neighbors to preserve optimal global performance.

\textbf{Prior work:} For large networked systems, several studies have been conducted on the use of offline (i.e. fixed-policy) controllers with local communication constraints. Local communication can facilitate faster computational speed \cite{Gasparri2020} and convergence \cite{Ballotta2023b}, particularly in the presence of delays \cite{Ballotta2023} --- however, this typically comes at the cost of supoptimal global performance \cite{Jiao2020}. In \cite{Shin2022}, a trade-off between performance and decentralization level (i.e. amount of global communication) is found for a truncated linear quadratic regulator. In system level synthesis, the offline predecessor of DLMPC \cite{Anderson2019}, localization is typically associated with reduced performance of around 10\% relative to the global controller. More generally, for both global and localized control, the topology of the network and actuator placement \cite{Summers2018} plays a role in achievable controller performance \cite{Mousavi2020, Tang2018} and convergence \cite{Baras2008}. In the realm of predictive control, communication constraints are important considerations \cite{Lucia2015}. However, the improved computational speeds offered by local predictive controllers typically come at the cost of suboptimal global performance and lack of stability and convergence guarantees \cite{Bemporad2010}. The novel DLMPC method \cite{amoalonso_dlmpc_cdc, amoalonso_dlmpc1_journal, amoalonso_dlmpc2_journal} overcomes some of these drawbacks, providing both stability and convergence guarantees --- however, thus far, its performance has not been substantially compared to that of the global, full-communication controller.\footnote{Prior work focuses on comparisons between centralized and distributed optimization schemes for localized MPC}. In the few instances that it has, it performed nearly identically to the global controller despite the inclusion of strict communication constraints \cite{Li2021_Layered}, prompting further investigation.

\textbf{Contributions:} This work contains two key contributions. First, we provide a rigorous characterization of how local communication constraints restrict (or preserve) the set of trajectories available under predictive control, and use this to provide guarantees on optimal global performance for localized MPC. Secondly, we provide an exact method for selecting an appropriate locality parameter $d$ for localized MPC. To the best of our knowledge, these are the first results of this kind on local communication constraints; our findings are useful to theoreticians and practitioners alike.

\textbf{Paper structure:} We first provide the appropriate mathematical formulations for global MPC and localized MPC in Section \ref{sec:sls}, then leverage these formulations to provide rigorous characterizations of global and localized MPC performance in Section \ref{sec:locality_analysis} --- this is the main theoretical contribution of the paper. We then use these results to create an algorithm that determines the optimal local communication parameter $d$ in Section \ref{sec:locality_selection}. We run this algorithm on a power-system inspired grid of coupled pendula, and observe the dependence of $d$ on various system parameters in Section \ref{sec:simulations}, and conclude with directions of future work in Section \ref{sec:conclusions}.

\subsection{NOTATION}
Lower-case and upper-case letters such as $x$ and $A$ denote vectors and matrices respectively, although lower-case letters might also be used for scalars or functions (the distinction will be apparent from the context). An arrow above a matrix quantity denotes vectorization, i.e. $\vv{A}$ is the vectorization of $A$. 
Boldface lower case letters such as $\xbf$ denote finite horizon signals. Unless otherwise stated, boldface upper case letters such as $\Kbf$ denote causal (i.e. lower block triangular) finite horizon operators: 
\begin{equation}
    \xbf = \begin{bmatrix} x_0 \\ x_1 \\ \vdots \\ x_T \end{bmatrix},
    \Kbf = \begin{bmatrix} K_{0,0} & & & \\
                           K_{1,1} & K_{1,0} & & \\
                           \vdots & \ddots & \ddots & \\
                           K_{T,T} & \ldots & K_{T,1} & K_{T,0}\end{bmatrix}
\end{equation}
where each $K_{i,j}$ is a matrix. Unless required, dimensions are not stated, and compatible dimension can be assumed. 

For matrix $A$, $(A)_{i,:}$ denotes the $i^\th$ row of $A$, $(A)_{:,j}$ denotes the $j^\th$ column, and $(A)_{i,j}$ denotes the element in the $i^\th$ row and $j^\th$ column. $(A)_{i:,:}$ denotes the rows of $A$ starting from the $i^\th$ row. Calligraphic letters such as $\Ycal$ denote sets, and script letters such as $\Rfrak$ denote a subset of $\Zbb^+$, e.g. $\Rfrak = \{1, \ldots, n \} \subset \Zbb^+$. For matrix $A$, $(A)_{\Rfrak,\Cfrak}$ denotes the submatrix of $A$ composed of the rows and columns specified by $\Rfrak$ and $\Cfrak$, respectively; for vector $x$, $(x)_{\Rfrak}$ denotes the vector composed of the elements specified by $\Rfrak$. 

Bracketed indices denote timestep of the real system --- for example, the system state is $x(\tau)$ at time $\tau$, not to be confused with $x_t$ which denotes the \textit{predicted} state $x$ in $t$ timesteps. Square bracket notation, e.g. $[x]_{i}$, denotes the components of $x$ corresponding to subsystem $i$.

For ease of variable manipulation, we also introduce augmented variables. For any matrix $Z$, the corresponding augmented matrix $Z^\text{blk}$ is defined as a block-diagonal matrix containing $N_x$ copies of $Z$, i.e. $Z^\text{blk} := \blkdiag(Z, \ldots Z)$. For any matrix $Y = Z \Lambda$, the corresponding vectorization can be written as $\vv{Y} = Z^\text{blk} \Lambvec$.

For state $x_0$, the corresponding augmented state $X$ is defined as $X(x_0) := \begin{bmatrix}(x_0)_1I & (x_0)_2I & \ldots & (x_0)_{N_x}I \end{bmatrix}$. For notational simplicity, we write $X$ instead of $X(x_0)$; dependence on $x_0$ is implicit. For any matrix $\Lambda$, $\Lambda x_0 = X \Lambvec$.

\section{LOCALIZED MPC} \label{sec:sls}
We begin with a brief summary of global MPC and localized MPC \cite{amoalonso_dlmpc_cdc, amoalonso_dlmpc1_journal, amoalonso_dlmpc2_journal}.

Consider the standard linear time-invariant discrete-time system:
\begin{equation} \label{eq:dynamics}
    x(t+1) = Ax(t) + Bu(t)
\end{equation}
where $x \in \Rbb^{N_x}$ is the state and $u \in \Rbb^{N_u}$ is the control input. System \eqref{eq:dynamics} can be interpreted as $N$ interconnected subsystems, each equipped with its own sub-controller. 
\changed{\begin{definition} \label{defn:graph}
$\Gcal_{(A,B)}(\Ecal,\Vcal)$ is the \textit{system graph}, which is undirected and unweighted. Assign each state $k \in \{1 \ldots N_x\}$ and actuation $l \in \{1 \ldots N_u \}$ to some subsystem $i \in \{1 \ldots N\}$. Each subsystem $i$ is identified with a vertex $v_{i}\in \Vcal$, and an edge $e_{ij}\in \Ecal$ exists whenever $[A]_{ij}\neq 0$ or $[B]_{ij}\neq 0$ for subsystems $i$ and $j$. 
\end{definition}}

\changed{A natural choice is to assign each state to its own subsystem, and to assign actuation to the states they act upon. Occasionally, two or more states may be assigned to the same subsystem --- for example, when two states represent two different measurements (e.g. phase and frequency) on the same physical location. See Section \ref{sec:simulations} for an example.}

The MPC problem at each timestep $\tau$ is defined as follows:

\begin{subequations} \label{eq:mpc_original}
\begin{align}
\label{eq:mpc_original_obj}
\underset{x_t, u_t, \gamma_t}{\min} \quad & \sum_{t=0}^{T-1}f_{t}(x_{t},u_{t}) + f_{T}(x_{T}) \\
\text{s.t.} \quad & x_0 = x(\tau), \\
& x_{t+1} = Ax_t + Bu_t, \\ 
\label{eq:mpc_original_constr}
& x_T \in \Xcal_T, ~ x_{t} \in \Xcal_t, ~ u_t \in \Ucal_t, \\ \label{eq:mpc_controller}
& u_{t} = \gamma_t(x_{0:t},u_{0:t-1}), ~ t=0,...,T-1
\end{align}
\end{subequations}
where $f_t(\cdot,\cdot)$ and $f_T(\cdot)$ are assumed to be closed, proper, and convex, with the minimum at the origin; $\gamma_t(\cdot)$ is a measurable function of its arguments; and sets $\Xcal_T$, $\Xcal_t$, and $\Ucal$ are assumed to be closed and convex sets containing the origin for all $t$.
\changed{
\begin{remark}
Problem \eqref{eq:mpc_original} is the closed-loop version of the standard MPC problem --- they are mathematically equivalent. Policies $\gamma_t$ are time-varying and capture all possible causal combinations of state $x_t$ and input $u_t$.\footnote{Readers are referred to \cite{amoalonso_dlmpc1_journal} for details.} 
\end{remark}
}

In localized MPC, we impose communication constraints such that each subsystem can only communicate with a other subsystems within its local region. We shall refer to these constraints interchangeably as \textit{local communication constraints} or \textit{locality constraints}. Let $\Ncal(i)$ denote the set of nodes that subsystem $i$ can communicate with them. Then, to enforce locality constraints on \eqref{eq:mpc_original}, we replace constraint \eqref{eq:mpc_controller} with
\begin{equation}
    [u_t]_i = \gamma_{i,t}([x_{0:t}]_{j \in \Ncal(i)}, [u_{0:t-1}]_{j \in \Ncal(i)} )
\end{equation}

Most SLS-based works use the concept of $d$-local neighborhoods, replacing $\Ncal(i)$ with $\Ncal_d(i)$:
\begin{definition}
    $\Ncal_d(i)$ denotes the \textit{$d$-local neighborhood of subsystem $i$}. Subsystem $j \in \Ncal_d(i)$ if there exists a path of $d$ or less edges between subsystems $i$ and $j$ in $\Gcal_{(A,B)}(\Ecal,\Vcal)$.
\end{definition}

The inclusion of local communication constraints renders problem \eqref{eq:mpc_original} difficult to solve. To ameliorate this, we can apply ideas from the system level synthesis framework \cite{Anderson2019} to reparametrize the problem \cite{amoalonso_dlmpc1_journal} --- we now introduce relevant ideas from this framework. 

First, we define $\hat{A}:=\blkdiag(A,...,A)$, $\hat {B}:=\blkdiag(B,...,B)$, and $Z$, the block-downshift matrix --- with identity matrices along the first block sub-diagonal and zeros elsewhere. The behavior of system \eqref{eq:dynamics} over time horizon $t = 0 \ldots T$ can be written in terms of \textit{closed-loop maps} in signal domain:
\begin{subequations} \label{eq:closed_loop_maps}
    \begin{equation}
        \xbf = (I - Z(\hat{A} + \hat{B}\Kbf))^{-1}\wbf =: \Psibf_x \wbf
    \end{equation}
    \begin{equation}
        \ubf = \Kbf\Psibf_x\wbf =: \Psibf_u \wbf
    \end{equation}
\end{subequations}
where $\wbf$ represents driving noise; in our case, where no driving noise is present, it consists only of the initial condition, i.e. $\wbf = \begin{bmatrix} x_0^\top & 0 & \ldots & 0 \end{bmatrix}^\top$. The key idea is to use closed-loop maps $\Psibf_x$, $\Psibf_u$ as optimization variables, subject to appropriate constraints.

\begin{theorem} \label{thm:sls} (Theorem 2.1 of \cite{Anderson2019}) For system \eqref{eq:dynamics} with causal state feedback control policy $\ubf = \Kbf \xbf$,
\begin{enumerate}
    \item The affine subspace of causal closed-loop maps
    \begin{equation} \label{eq:sls_theorem}
        \Z \begin{bmatrix}
            \Psibf_x \\ \Psibf_u
        \end{bmatrix}
        = I
    \end{equation}
    parametrizes all achievable closed-loop maps $\Psibf_x$, $\Psibf_u$. Here, $\Z := \begin{bmatrix} I - Z\hat{A} & -Z\hat{B} \end{bmatrix}$
    \item For any closed-loop maps $\Psibf_x$, $\Psibf_u$ obeying \eqref{eq:sls_theorem}, the controller $\Kbf = \Psibf_u\Psibf_x^{-1}$ achieves the desired closed-loop response as per \eqref{eq:closed_loop_maps}.
\end{enumerate}    
\end{theorem}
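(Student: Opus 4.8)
The plan is to establish both claims by direct algebraic manipulation in the finite-horizon signal domain, relying throughout on one structural fact: the block-downshift $Z$ is strictly block lower triangular, so $Z\hat{A}$ and $Z\hat{B}$ are strictly block lower triangular, whereas causal operators ($\Kbf$, $\Psibf_x$, $\Psibf_u$, and their sums and products) are block lower triangular; consequently any block lower triangular operator with identity diagonal blocks --- in particular $I - Z(\hat{A}+\hat{B}\Kbf)$ for causal $\Kbf$ --- is invertible, and its inverse is again block lower triangular, i.e. causal.

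For claim 1, I would begin from the horizon-stacked dynamics $\xbf = Z(\hat{A}\xbf + \hat{B}\ubf) + \wbf$, substitute the state-feedback law $\ubf = \Kbf\xbf$, and solve for $\xbf$ to obtain $\xbf = (I - Z(\hat{A}+\hat{B}\Kbf))^{-1}\wbf =: \Psibf_x\wbf$, the inverse existing by the structural fact above, and then $\ubf = \Kbf\Psibf_x\wbf =: \Psibf_u\wbf$; both maps are causal as products of causal operators. Substituting these into $\Z\begin{bmatrix}\Psibf_x\\\Psibf_u\end{bmatrix}$ and collecting terms gives $(I - Z\hat{A} - Z\hat{B}\Kbf)\Psibf_x = (I - Z(\hat{A}+\hat{B}\Kbf))\Psibf_x = I$, which is precisely \eqref{eq:sls_theorem}; hence every achievable pair lies in the affine space. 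Together with claim 2, which supplies the reverse inclusion, this yields the claimed parametrization.

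For claim 2, I would take any causal $\Psibf_x$, $\Psibf_u$ satisfying \eqref{eq:sls_theorem} and first prove $\Psibf_x$ is invertible: reading off the block-diagonal part of \eqref{eq:sls_theorem}, the terms $Z\hat{A}\Psibf_x$ and $Z\hat{B}\Psibf_u$ are strictly block lower triangular and so contribute nothing on the block-diagonal, forcing the diagonal blocks of $\Psibf_x$ to equal those of $I$; thus $\Psibf_x$ is invertible with causal inverse, and $\Kbf := \Psibf_u\Psibf_x^{-1}$ is causal. Running the computation of claim 1 in reverse then shows \eqref{eq:sls_theorem} is equivalent to $(I - Z(\hat{A}+\hat{B}\Kbf))\Psibf_x = I$ together with $\Psibf_u = \Kbf\Psibf_x$, i.e. $\Psibf_x = (I - Z(\hat{A}+\hat{B}\Kbf))^{-1}$, so the controller $\ubf = \Kbf\xbf$ reproduces exactly the closed-loop maps in \eqref{eq:closed_loop_maps}.

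The only real obstacle, beyond bookkeeping, is the invertibility and causality argument for $\Psibf_x$ and $\Kbf$: one must take care that it is the block-triangular structure --- identity diagonal blocks --- and not any spectral or stability assumption on $A$ that guarantees the relevant inverses exist and remain causal. Everything else reduces to substitution and matching terms against \eqref{eq:closed_loop_maps}.
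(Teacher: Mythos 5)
Your proof is correct and takes essentially the same route as the source: the paper itself states this theorem by citation to Theorem 2.1 of \cite{Anderson2019} without reproducing a proof, and your two-directional argument --- verifying \eqref{eq:sls_theorem} for the maps induced by any causal $\Kbf$, and conversely using the strict block-lower-triangularity of $Z\hat{A}$ and $Z\hat{B}$ to conclude that $\Psibf_x$ has identity diagonal blocks, hence a causal inverse, so that $\Kbf = \Psibf_u\Psibf_x^{-1}$ reproduces the closed-loop response \eqref{eq:closed_loop_maps} --- is precisely the standard finite-horizon SLS argument given there. No gaps; in particular you correctly identify that invertibility rests on the block-triangular structure rather than any stability assumption on $A$.
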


Theorem \ref{thm:sls} allows us to reformulate a control problem over state and input signals into an equivalent problem over closed-loop maps $\Psibf_x$, $\Psibf_u$. In the case with no driving noise, only the first block column of $\Psibf_x$, $\Psibf_u$ need to be computed due to the zeros in $\wbf$. We can rewrite \eqref{eq:closed_loop_maps} as $\xbf = \Phibf_x x_0$ and $\ubf = \Phibf_u x_0$, where $\Phibf_x$ and $\Phibf_u$ correspond to the first block columns of $\Psibf_x$ and $\Psibf_u$, respectively. Then, we can rewrite \eqref{eq:sls_theorem} as 
\begin{equation} \label{eq:dynamics_constr_xu}
    \Z\begin{bmatrix}\Phibf_x \\ \Phibf_u \end{bmatrix} = \IOBlock
\end{equation}
Notice that for any $\Phibf_x$ satisfying constraint \eqref{eq:dynamics_constr_xu}, $(\Phibf_x)_{1:N_x,:} = I$. This is due to the structure of $\Z$. Also, constraint \eqref{eq:dynamics_constr_xu} is always feasible, with solution space of dimension $N_uT$. To see this, notice that $\rank(\Z) = \rank\begin{bmatrix} \Z & \IOBlock \end{bmatrix}$ always holds, since $\Z$ has full row rank due to the identity blocks on its diagonal; apply the Rouch\'e-Capelli theorem \cite{shafarevich2012linear} to get the desired result.

We now apply this closed-loop parametrization to \eqref{eq:mpc_original}, as is done in \cite{amoalonso_dlmpc1_journal}:

\begin{subequations} \label{eq:mpc_sls}
\begin{align}
\label{eq:lmpc_obj}
\underset{\Phibf_x, \Phibf_u}{\min} \quad & f(\Phibf_x x_0, \Phibf_u x_0) \\ 
\text{s.t.} \quad & x_0 = x(\tau), \\ 
\label{eq:lmpc_achiev_constraint}
& \Z\begin{bmatrix}\Phibf_x \\ \Phibf_u \end{bmatrix} = \IOBlock, \\
\label{eq:lmpc_constr}
& \Phibf_x x_0 \in \Xcal, ~ \Phibf_u x_0 \in \Ucal
\end{align}
\end{subequations}
Objective $f$ is defined such that \eqref{eq:mpc_original_obj} and \eqref{eq:lmpc_obj} are equivalent; similarly, constraint sets $\Xcal$, $\Ucal$ are defined such that \eqref{eq:mpc_original_constr} and \eqref{eq:lmpc_constr} are equivalent. 
\changed{
\begin{remark}
    Problem \eqref{eq:mpc_sls} is mathematically equivalent to problem \eqref{eq:mpc_original}. Both are alternative --- but equivalent --- formulations of the standard MPC problem.
\end{remark}
}

In \eqref{eq:mpc_sls}, $\Phibf_x$ and $\Phibf_u$ not only represent the closed-loop maps of the system, but also the communication structure of the system. For instance, if $[\Phibf_x]_{i,j} = 0$ and $[\Phibf_u]_{i,j} = 0 \quad \forall i \neq j $, then node $i$ requires no knowledge of $(x_0)_{j \neq i}$ --- consequently, no communication is required between nodes $i$ and $j$ for all $j \neq i$. The relationship between closed-loop maps and communication constraints are further detailed in \cite{amoalonso_dlmpc1_journal}. Thus, to incorporate local communication into this formulation, we introduce an additional constraint:
\begin{equation} \label{eq:loc_constraint_xu}
\Phibf_x \in \Lcal_x, ~ \Phibf_u \in \Lcal_u 
\end{equation}
where $\Lcal_x$ and $\Lcal_u$ are sets with some prescribed sparsity pattern that is compatible with the desired local communication constraints.

For simplicity, we define decision variable $\Phibf := \begin{bmatrix} \Phibf_x \\ \Phibf_u \end{bmatrix}$, which has $\nPhi := N_x(T+1) + N_uT$ rows. We also rewrite locality constraints \eqref{eq:loc_constraint_xu} as $\Phibf \in \Lcal$.

From here on, we shall use \textit{global MPC} to refer to \eqref{eq:mpc_original}, or equivalently, \eqref{eq:mpc_sls}. We shall use \textit{localized MPC} to refer to \eqref{eq:mpc_sls} with constraint \eqref{eq:loc_constraint_xu}. We remark that for appropriately chosen locality constraints, localized MPC confers substantial scalability benefits \cite{amoalonso_dlmpc1_journal}. \changed{In particular, we can use ADMM \cite{boyd2011distributed} to split the localized MPC problem into subproblems, which are solved in parallel, i.e. each subsystem solves its own subproblem. When the system graph is sparse (see Section \ref{sec:simulations} for an example), the size of subproblems do not depend on the total size of the network \cite{Anderson2019, amoalonso_dlmpc_cdc, amoalonso_dlmpc1_journal}.}

\section{GLOBAL PERFORMANCE OF LOCALIZED MPC} \label{sec:locality_analysis}

In this section, we analyze the effect of locality constraints $\Phibf \in \Lcal$ on MPC performance. We are especially interested in scenarios where localized MPC achieves \textit{optimal global performance}, i.e. $f^* = f^*_\Lcal$, where $f^*$ and $f^*_\Lcal$ are the solutions to the global MPC problem and localized MPC problem, respectively, for some state $x_0$. 

First, we must analyze the space of available trajectories from state $x_0$ for both global and localized MPC. We denote an available trajectory $\ybf := \begin{bmatrix} \xbf_{1:T} \\ \ubf \end{bmatrix}$.

\begin{definition} \textit{Trajectory set} $\Ycal(x_0)$ denotes the set of available trajectories from state $x_0$ under dynamics \eqref{eq:dynamics}:
\begin{equation*}
    \Ycal(x_0) := \{ \ybf: \exists \Phibf \text{ s.t. } \Z\Phibf = \IOBlock, \ybf=(\Phibf)_{N_x+1:,:} x_0\}
\end{equation*}
\textit{Localized trajectory set} $\Ycal_\Lcal(x_0)$ denotes the set of available trajectories from state $x_0$ under dynamics \eqref{eq:dynamics} and locality constraint $\Phibf \in \Lcal$:
\begin{equation*}
\begin{aligned}
    \Ycal_\Lcal(x_0) := \{ \ybf: \exists \Phibf \text{ s.t. } \Z\Phibf = \IOBlock, \\ \Phibf \in \Lcal, \quad \ybf=(\Phibf)_{N_x+1:,:} x_0\}
\end{aligned}
\end{equation*}
\end{definition}

\begin{proposition} \label{prop:global_optimality}
\textit{(Optimal global performance)} For state $x_0$, if the local communication constraint set $\Lcal$ is chosen such that $\Ycal(x_0) = \Ycal_\Lcal(x_0)$, then the localized MPC problem will attain optimal global performance.
\end{proposition}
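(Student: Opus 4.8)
The plan is to recast both the global MPC problem \eqref{eq:mpc_sls} and its localized counterpart (that is, \eqref{eq:mpc_sls} together with the constraint $\Phibf \in \Lcal$) as optimization problems over the trajectory $\ybf$ rather than over the operator $\Phibf$, and then to observe that the assumption $\Ycal(x_0) = \Ycal_\Lcal(x_0)$ forces the two reduced problems to coincide.

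First I would note that the objective and the state/input constraints of \eqref{eq:mpc_sls} depend on $\Phibf$ only through the product $\Phibf x_0$: the cost is evaluated at $(\Phibf_x x_0, \Phibf_u x_0)$ and the constraints read $\Phibf_x x_0 \in \Xcal$ and $\Phibf_u x_0 \in \Ucal$. Since every $\Phibf$ satisfying $\Z\Phibf = \IOBlock$ has $(\Phibf_x)_{1:N_x,:} = I$, the first $N_x$ entries of $\Phibf x_0$ equal $x_0$; thus the problem data is determined by $x_0$ together with $\ybf = (\Phibf)_{N_x+1:,:} x_0$, hence, as $x_0$ is fixed, by $\ybf$ alone. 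Writing $g(\ybf)$ for the resulting cost and $\mathcal{C}$ for the set of trajectories $\ybf$ that meet the (fixed-$x_0$) state and input constraints, I would then establish that \eqref{eq:mpc_sls} is equivalent to $\min\{ g(\ybf) : \ybf \in \Ycal(x_0) \cap \mathcal{C} \}$ and that the localized problem is equivalent to $\min\{ g(\ybf) : \ybf \in \Ycal_\Lcal(x_0) \cap \mathcal{C} \}$. One direction of each equivalence is trivial, as any feasible $\Phibf$ produces a feasible $\ybf$ with the same cost; the other direction is exactly where the definition of the (localized) trajectory set enters — a feasible $\ybf \in \Ycal(x_0)$ (resp. $\Ycal_\Lcal(x_0)$) comes by definition with a $\Phibf$ obeying $\Z\Phibf = \IOBlock$ (resp. and $\Phibf \in \Lcal$) and $(\Phibf)_{N_x+1:,:} x_0 = \ybf$, which is feasible for the corresponding MPC problem and attains cost $g(\ybf)$.

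Finally, the hypothesis $\Ycal(x_0) = \Ycal_\Lcal(x_0)$ makes the feasible sets $\Ycal(x_0) \cap \mathcal{C}$ and $\Ycal_\Lcal(x_0) \cap \mathcal{C}$ identical, and the objective $g$ is common to both reduced problems, so their optimal values coincide: $f^* = f^*_\Lcal$, i.e. localized MPC attains optimal global performance. I expect the only real subtlety to lie in the reduction step — carefully checking that nothing in the cost or constraints of \eqref{eq:mpc_sls} distinguishes two operators with the same $\Phibf x_0$, so that both problems legitimately descend to optimizations over $\ybf$. As a sanity check, $\Ycal_\Lcal(x_0) \subseteq \Ycal(x_0)$ holds unconditionally, since adding $\Phibf \in \Lcal$ only shrinks the set of admissible operators; the hypothesis therefore says precisely that the locality constraint discards no achievable trajectory.
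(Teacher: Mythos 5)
Your proposal is correct and follows essentially the same route as the paper: both recast the global and localized problems as optimizations over the trajectory $\ybf$ subject to $\ybf \in \Ycal(x_0)$ (resp. $\ybf \in \Ycal_\Lcal(x_0)$), so that the hypothesis $\Ycal(x_0) = \Ycal_\Lcal(x_0)$ makes the two problems identical and their optimal values equal. Your extra care in checking that the cost and constraints depend on $\Phibf$ only through $\Phibf x_0$ simply makes explicit the reduction the paper states directly.
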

\begin{proof}
Global MPC problem \eqref{eq:mpc_original} can be written as
\begin{subequations} \label{eq:lmpctraj}
\begin{align}
\underset{\xbf, \ubf}{\min} \quad & f(\xbf, \ubf) \\
\text{s.t.} \quad & x_0 = x(\tau), ~ \xbf \in \Xcal, ~ \ubf \in \Ucal, \\
& \ybf := \begin{bmatrix} \xbf_{1:T} \\ \ubf \end{bmatrix} \in \Ycal(x_0) \label{eq:traj_constraint}
\end{align}
\end{subequations}
The localized MPC problem can also be written in this form, 
by replacing $\Ycal(x_0)$ in constraint \eqref{eq:traj_constraint} with $\Ycal_\Lcal(x_0)$. Thus, if $\Ycal(x_0) = \Ycal_\Lcal(x_0)$, the two problems are equivalent and will have the same optimal values.
\end{proof}

\textit{Remark:} This is a sufficient but not necessary condition for optimal global performance. Even if this condition is not satisfied, i.e. $\Ycal_\Lcal(x_0) \subset \Ycal(x_0)$, the optimal global trajectory may be contained within $\Ycal_\Lcal(x_0)$. However, this is dependent on objective $f$. Our analysis focuses on stricter conditions which guarantee optimal global performance for \textit{any} objective function.

We now explore cases in which $\Ycal(x_0) = \Ycal_\Lcal(x_0)$, i.e. the localized MPC problem attains optimal global performance. Localized trajectory set $\Ycal_\Lcal(x_0)$ is shaped by the dynamics and locality constraints:
\begin{subequations}
\begin{align}
    \label{eq:constr1}
    & \Z\Phibf = \IOBlock \\
    \label{eq:constr2}
    & \Phibf \in \Lcal
\end{align}
\end{subequations}

To obtain a closed-form solution for $\Ycal_\Lcal(x_0)$, we will parameterize these constraints. Two equivalent formulations are available. The dynamics-first formulation parameterizes constraint \eqref{eq:constr1}, then \eqref{eq:constr2}, and the locality-formulation parameterizes the constraints in the opposite order. The dynamics-first formulation clearly shows how local communication constraints affect the trajectory space; the locality-first formulation is less clear in this regard, but can be implemented in code with lower computational complexity than the dynamics-first formulation. We now derive each formulation.

\subsection{DYNAMICS-FIRST FORMULATION} \label{sec:formulation1}

We first parameterize \eqref{eq:constr1}, which gives a closed-form expression for trajectory set $\Ycal(x_0)$.

\begin{lemma} \label{lemm:yset_formulation1} \textit{(Image space representation of trajectory set)} 
\begin{enumerate}
    \item The trajectory set from state $x_0$ is described by:
\begin{equation*}
    \Ycal(x_0) = \{ \ybf : \ybf = Z_p x_0 + Z_h X \lambda, \quad \lambda \in \Rbb^{\nPhi} \}
\end{equation*}
\begin{equation*}
\begin{aligned}
\text{where} \quad Z_p & := (\Zi)_{N_x+1:,:} \IOBlock \\ 
\text{and} \quad Z_h & := (I - \Zi\Z)_{N_x+1:,:}
\end{aligned}
\end{equation*}

and the size of the trajectory set is
\begin{equation*}
\dim(\Ycal(x_0)) = \rank(Z_hX)    
\end{equation*}
    \item If $x_0$ has at least one nonzero value, then 
    \begin{equation*}
    \dim(\Ycal(x_0)) = N_uT    
    \end{equation*}
\end{enumerate}
\end{lemma}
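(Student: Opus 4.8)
The plan is to parameterize the affine constraint $\Z\Phibf = \IOBlock$ explicitly using a particular solution plus a homogeneous part, then push this through the linear map $\Phibf \mapsto (\Phibf)_{N_x+1:,:}x_0$ to obtain the image-space description of $\Ycal(x_0)$; the dimension claims then follow from a rank computation. For part 1, I would first recall from the discussion after \eqref{eq:dynamics_constr_xu} that $\Z$ has full row rank (identity blocks on its block diagonal), so $\Z\Zi = I$ for the pseudoinverse $\Zi$, and the general solution of $\Z\Phibf = \IOBlock$ is $\Phibf = \Zi\IOBlock + (I - \Zi\Z)M$ for free $M$. Restricting to the rows indexed by $N_x+1:$ and right-multiplying by $x_0$ gives $\ybf = (\Zi)_{N_x+1:,:}\IOBlock\, x_0 + (I - \Zi\Z)_{N_x+1:,:} M x_0 = Z_p x_0 + Z_h (Mx_0)$. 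Since $M$ ranges over all matrices of the appropriate size, $Mx_0$ ranges over... and here I would use the identity from the Notation section, $Mx_0 = X\vec{M}$ where $X = X(x_0)$, so $Mx_0$ sweeps out $\image(X)$ as $\vec{M} =: \lambda$ ranges over $\Rbb^{\nPhi}$. This yields exactly $\Ycal(x_0) = \{Z_p x_0 + Z_h X\lambda : \lambda \in \Rbb^{\nPhi}\}$, an affine set whose dimension is the dimension of the linear part $\image(Z_h X)$, i.e. $\rank(Z_h X)$.

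For part 2, I need to show $\rank(Z_h X) = N_uT$ whenever $x_0 \neq 0$. The upper bound $\rank(Z_h X) \le \rank(Z_h)$ and the fact that $Z_h = (I - \Zi\Z)_{N_x+1:,:}$ picks out the last $N_uT$ rows of the projector onto $\ker(\Z)$ — which has rank $N_uT$ by the Rouché--Capelli / rank-nullity argument already invoked in the text — give $\rank(Z_h X) \le N_uT$; I should also confirm $\rank(Z_h) = N_uT$ exactly, using that the first $N_x$ rows of any $\Phibf$ in the solution space are pinned to $I$ (so the ``interesting'' degrees of freedom all live in the last $N_uT$ rows). For the reverse inequality, I would exploit the block structure: $X = X(x_0)$ is block-diagonal-like with $N_x$ identity blocks scaled by the entries $(x_0)_k$, so $Z_h X = \big[(x_0)_1 (Z_h)_{:,\text{block }1} \;\cdots\; (x_0)_{N_x}(Z_h)_{:,\text{block }N_x}\big]$ — actually more precisely $Z_h X$ selects and scales block-columns of $Z_h$ — and if even one $(x_0)_k \neq 0$ then the corresponding scaled block-column of $Z_h$ already has full rank $N_uT$ by the structure of $Z_h$ (each block column of $I - \Zi\Z$, restricted to the input rows, spans the full $N_uT$-dimensional homogeneous input space). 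Hence $\rank(Z_h X) \ge N_uT$, and combined with the upper bound, equality holds.

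The main obstacle I anticipate is the reverse rank inequality in part 2 — specifically, verifying that a single nonzero component of $x_0$ suffices to recover the full $N_uT$-dimensional trajectory space. This requires understanding the fine structure of $(I - \Zi\Z)_{N_x+1:,:}$ and arguing that each of its ``block columns'' (corresponding to one coordinate of $x_0$) individually spans the homogeneous part; intuitively this is because the homogeneous solutions $\Phibf = (I - \Zi\Z)M$ are parameterized freely column-by-column, and the first block column of $\Phibf$ (the only one that matters once we multiply by $\wbf = [x_0^\top\,0\cdots0]^\top$) can be set independently — so scaling by any nonzero $(x_0)_k$ is just an invertible reparameterization of that column's $N_uT$ free parameters. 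I would make this precise by writing $\Z$ in its block-downshift form and checking that the map from the free parameters in one block column of $M$ to the input-rows of $\Phibf$ is surjective onto $\Rbb^{N_uT}$. The particular-solution term $Z_p x_0$ does not affect the dimension count, so no subtlety there.
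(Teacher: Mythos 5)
Your part 1 and the upper bound in part 2 follow the paper's own proof essentially verbatim: parameterize the solutions of $\Z\Phibf = \IOBlock$ as $\Zi\IOBlock + (I-\Zi\Z)\Lambda$, drop the first $N_x$ rows, use $\Lambda x_0 = X\Lambvec$, and obtain $\rank(Z_h)=\nPhi-\rank(\Zi\Z)=N_uT$ after noting that the first $N_x$ rows of $I-\Zi\Z$ are zero. The gap is precisely in the step you flag as your main obstacle, the lower bound $\rank(Z_hX)\ge N_uT$: your description of the structure of $Z_hX$ is incorrect. Since $X=\begin{bmatrix}(x_0)_1 I & \cdots & (x_0)_{N_x}I\end{bmatrix}$ with each identity block of size $\nPhi\times\nPhi$ (the blocks of $\Lambvec$ correspond to the \emph{columns of $\Lambda$}, not to coordinates of $x_0$ inside $Z_h$), the product is $Z_hX=\begin{bmatrix}(x_0)_1 Z_h & \cdots & (x_0)_{N_x}Z_h\end{bmatrix}$: each block is an entire scaled copy of $Z_h$, not a ``block-column of $Z_h$.'' Consequently the sub-claim your argument rests on -- that each block column of $(I-\Zi\Z)_{N_x+1:,:}$, restricted to the input rows, spans $\Rbb^{N_uT}$ -- is false in general; in the paper's own three-node example several columns of $Z_h$ are identically zero. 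With the correct structure the obstacle evaporates and the argument is one line, which is exactly what the paper does: every column of $Z_hX$ is a scalar multiple of a column of $Z_h$, so $\image(Z_hX)\subseteq\image(Z_h)$, and if some $(x_0)_k\neq 0$ then the $k$-th block equals $(x_0)_k Z_h$, so $\image(Z_hX)\supseteq\image(Z_h)$; hence $\rank(Z_hX)=\rank(Z_h)=N_uT$.

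Your fallback sketch at the end (vary only the $k$-th column of $M$ and show the map from its free parameters to the input rows of the homogeneous part is surjective onto $\Rbb^{N_uT}$) can be made rigorous and would close the gap -- restricting $\Lambvec$ to its $k$-th block turns $Z_hX\Lambvec$ into $(x_0)_k Z_h$ acting on that block, which is the same scaled-copy observation -- but the detour through the causal structure of $\ker(\Z)$ to establish surjectivity onto the input rows is unnecessary once $\rank(Z_h)=N_uT$ is in hand, so I would simply replace your block-column reasoning with the identity $Z_hX=\begin{bmatrix}(x_0)_1 Z_h & \cdots & (x_0)_{N_x}Z_h\end{bmatrix}$.
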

\begin{proof}
As previously shown, \eqref{eq:constr1} always has solutions. We can parameterize the space of solutions $\Phibf$ as:
\begin{equation} \label{eq:phi_constraint}
    \Phibf = \Zi \IOBlock + (I - \Zi\Z) \Lambda
\end{equation}
where $\Lambda$ is a free variable with the same dimensions as $\Phibf$. Recall that $\Phibf_{1:N_x,:} = I$ always holds --- thus, we can omit the first $N_x$ rows of \eqref{eq:phi_constraint}. Define $\Phibf_2 := (\Phibf)_{N_x+1:,:}$ and consider
\begin{equation} \label{eq:phi2_constraint}
    \Phibf_2 = Z_p + Z_h\Lambda
\end{equation}
Combining \eqref{eq:phi2_constraint} and the definition of $\ybf$, we have
\begin{equation}
    \ybf = \Phibf_2x_0 = Z_px_0 + Z_h\Lambda x_0
\end{equation}
Making use of augmented state $X$, rewrite this as
\begin{equation} \label{eq:trajectory_expr}
    \ybf = \Phibf_2x_0 = Z_px_0 + Z_h X \Lambvec
\end{equation}
This gives the desired expression for $\Ycal(x_0)$ and its size.

To prove 2), notice that if $x_0$ has at least one nonzero value, then $\rank(Z_hX) = \rank(Z_h)$ due to the structure of $X$. All that is left is to show $\rank(Z_h) = N_uT$. First, note that $\rank(\Z) = N_x(T+1)$ due to the identity blocks on the diagonal of $\Z$. It follows that $\rank(\Zi) = \rank(\Zi \Z) = N_x(T+1) $. Thus, $\rank(I-\Zi \Z) = \nPhi - N_x(T+1) = N_uT$. Recall that $Z_h$ is simply $I-\Zi \Z$ with the first $N_x$ rows removed; this does not result in decreased rank, since all these rows are zero (recall that $\Phibf_{1:N_x,:}$ is always equal to $I$). Thus, $\rank(Z_h) = \rank(I-\Zi \Z) = N_uT$.
\end{proof}

To write the closed form of localized trajectory set $\Ycal_\Lcal(x_0)$, we require some definitions:

\begin{definition} \label{defn:loc_constrained_indices} \textit{Constrained vector indices} $\Lfrak$ denote the set of indices of $\Phibf_2 := (\Phibf)_{N_x+1:,:}$ that are constrained to be zero by the locality constraint \eqref{eq:constr2}, i.e.
\begin{equation*}
    (\vv{\Phibf_2})_{\Lfrak} = 0 \Leftrightarrow \Phibf \in \Lcal
\end{equation*}
Let $\nL$ be the cardinality of $\Lfrak$.
\end{definition}

 We now parameterize \eqref{eq:constr2} and combine this with Lemma \ref{lemm:yset_formulation1}, which gives a closed-form expression for localized trajectory set $\Ycal_\Lcal(x_0)$.
 
\begin{lemma} \label{lemm:ylset_formulation1}
\textit{(Image space representation of localized trajectory set)} 
Assume there exists some $\Phibf$ that satisfies constraints \eqref{eq:constr1} and \eqref{eq:constr2}. Then, the localized trajectory set from state $x_0$ is described by the following:
\begin{equation*}
\begin{aligned}
    \Ycal_\Lcal(x_0) = \{ \ybf: \quad & \ybf = Z_p x_0 + Z_h X \Fi g \quad + \\ 
    & Z_h X (I - \Fi F)\mu, \quad \mu \in \Rbb^{\nL} \} 
\end{aligned}
\end{equation*}
\begin{equation*}
\text{where} \quad F := (\Zhb)_{\Lfrak,:} \quad \text{and} \quad g := -(\vv{Z_p})_{\Lfrak}   
\end{equation*}
and the size of the localized trajectory set is 
\begin{equation*}
\dim(\Ycal_\Lcal(x_0)) = \rank(Z_h X (I-\Fi F))    
\end{equation*}
\end{lemma}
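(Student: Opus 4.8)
The plan is to build directly on Lemma~\ref{lemm:yset_formulation1} and incorporate the locality constraint \eqref{eq:constr2} as an affine constraint on the free parameter. Recall from the proof of Lemma~\ref{lemm:yset_formulation1} that every $\Phibf$ satisfying the dynamics constraint \eqref{eq:constr1} has the form $\Phibf_2 := (\Phibf)_{N_x+1:,:} = Z_p + Z_h\Lambda$ for a free matrix $\Lambda$, with associated trajectory $\ybf = Z_p x_0 + Z_h X\Lambvec$. So the task reduces to expressing the locality constraint on $\Phibf$ as a constraint on $\Lambvec$, substituting the resulting solution set back into this expression for $\ybf$, and reading off the dimension.

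First I would vectorize $\Phibf_2 = Z_p + Z_h\Lambda$: using the identity $\vv{Z\Lambda} = Z^\text{blk}\Lambvec$ from the notation section, this becomes $\vv{\Phibf_2} = \vv{Z_p} + \Zhb\Lambvec$. By Definition~\ref{defn:loc_constrained_indices}, $\Phibf \in \Lcal$ is exactly the requirement $(\vv{\Phibf_2})_\Lfrak = 0$, so selecting the rows indexed by $\Lfrak$ turns this into the affine system $F\Lambvec = g$ with $F := (\Zhb)_{\Lfrak,:}$ and $g := -(\vv{Z_p})_\Lfrak$, precisely as defined in the statement. The standing hypothesis that some $\Phibf$ satisfies both \eqref{eq:constr1} and \eqref{eq:constr2} guarantees $F\Lambvec = g$ is consistent, so $\Fi g$ is a particular solution and the full solution set is $\{\Fi g + (I - \Fi F)\mu\}$ as $\mu$ ranges freely. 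Substituting into $\ybf = Z_p x_0 + Z_h X\Lambvec$ yields $\ybf = Z_p x_0 + Z_h X\Fi g + Z_h X(I - \Fi F)\mu$, which is the claimed description; since this is an affine set whose linear part is the column space of $Z_h X(I-\Fi F)$, its dimension is $\rank(Z_h X(I-\Fi F))$.

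Two points need care. The first is bookkeeping: verifying that applying the row-selection $(\cdot)_\Lfrak$ to $\vv{Z_p} + \Zhb\Lambvec$ genuinely reproduces $F$ and $g$ as defined, which just uses the block-diagonal structure behind $\vv{Z\Lambda} = Z^\text{blk}\Lambvec$. The second, and the one I expect to be the main (if modest) obstacle, is the completeness/consistency step: I must argue both directions of set equality, i.e. that restricting $\Lambvec$ to the solution set of $F\Lambvec = g$ and pushing it through $\ybf = \Phibf_2 x_0 = Z_p x_0 + Z_h X\Lambvec$ captures \emph{exactly} $\Ycal_\Lcal(x_0)$. This follows because Lemma~\ref{lemm:yset_formulation1} already parametrizes all dynamics-feasible $\Phibf$ by $\Lambda$, and the locality constraint is equivalent to $F\Lambvec = g$; but it relies crucially on the feasibility hypothesis, without which $\Fi g$ would not be a valid particular solution and the parametrization of the solution set would break down.
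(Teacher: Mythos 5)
Your proposal is correct and follows essentially the same route as the paper's proof: vectorize $\Phibf_2 = Z_p + Z_h\Lambda$ via the block-diagonal identity, express the locality constraint as $F\Lambvec = g$, parametrize its solution set with the pseudoinverse under the feasibility hypothesis, and substitute back into $\ybf = Z_p x_0 + Z_h X \Lambvec$ to read off the set and its dimension. The only detail the paper adds that you omit is the brief remark that the identity block $\Phibf_1 = I$ never conflicts with locality constraints (its diagonal pattern corresponds to self-communication), which is why restricting attention to $\Phibf_2$ loses nothing.
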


\begin{proof}
Using the augmented matrix of $Z_h$, we can write the vectorization of \eqref{eq:phi2_constraint} as
\begin{equation}
    \vv{\Phibf_2} = \vv{Z_p} + \Zhb\Lambvec
\end{equation}
where $\Lambvec$ is a free variable. Incorporate locality constraint \eqref{eq:constr2} using the constrained vector indices:
\begin{equation}
    (\vv{Z_p} + \Zhb\Lambvec)_\Lfrak = 0
\end{equation}
This is equivalent to $(\vv{Z_p})_\Lfrak + (\Zhb)_{\Lfrak,:} \Lambvec = 0$, or $F\Lambvec = g$. We can parameterize this constraint as
\begin{equation}
    \Lambvec = \Fi g + (I-\Fi F)\mu
\end{equation}
where $\mu$ is a free variable. Plugging this into \eqref{eq:trajectory_expr} gives the desired expression for $\Ycal_\Lcal(x_0)$ and its size. We remark that there is no need to consider $\Phibf_1 = I$ in relation to the locality constraints, since the diagonal sparsity pattern of the identity matrix (which corresponds to self-communication, i.e. node $i$ `communicating' to itself) satisfies any local communication constraint.
\end{proof}

\begin{theorem} \label{thm:equal_trajsets_1}
\textit{(Optimal global performance)} If $x_0$ has at least one nonzero value, then localized MPC attains optimal global performance if
\begin{enumerate}
    \item there exists some $\Phibf$ that satisfies constraints \eqref{eq:constr1} and \eqref{eq:constr2}, and
    \item $\rank(Z_h X (I-\Fi F)) = N_uT$
\end{enumerate}
\end{theorem}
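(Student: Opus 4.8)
The plan is to show that hypotheses (1) and (2), together with the standing assumption that $x_0$ has a nonzero entry, force $\Ycal_\Lcal(x_0) = \Ycal(x_0)$; once this equality is in hand, Proposition \ref{prop:global_optimality} immediately delivers the conclusion. The whole argument is a dimension count: both trajectory sets are affine subspaces of the same finite-dimensional ambient space, one is contained in the other, and under the stated hypotheses they have equal dimension, hence they coincide.

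First I would record the inclusion $\Ycal_\Lcal(x_0) \subseteq \Ycal(x_0)$. This is immediate from the definitions: any $\ybf \in \Ycal_\Lcal(x_0)$ comes from some $\Phibf$ satisfying both \eqref{eq:constr1} and \eqref{eq:constr2}, in particular \eqref{eq:constr1}, so $\ybf \in \Ycal(x_0)$; equivalently, imposing the locality constraint can only shrink the feasible set. Next I would pin down the two dimensions. Since $x_0$ has a nonzero entry, Lemma \ref{lemm:yset_formulation1}(2) gives $\dim \Ycal(x_0) = N_uT$, and the closed form in Lemma \ref{lemm:yset_formulation1}(1) exhibits $\Ycal(x_0)$ as the affine subspace $Z_p x_0 + \image(Z_h X)$. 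Hypothesis (1) guarantees $\Ycal_\Lcal(x_0) \neq \emptyset$ and, crucially, is exactly the assumption needed to invoke Lemma \ref{lemm:ylset_formulation1}, which presents $\Ycal_\Lcal(x_0)$ as the affine subspace $Z_p x_0 + Z_h X \Fi g + \image\!\big(Z_h X (I - \Fi F)\big)$ with $\dim \Ycal_\Lcal(x_0) = \rank\!\big(Z_h X (I - \Fi F)\big)$; hypothesis (2) makes this dimension equal to $N_uT$.

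Finally I would invoke the elementary fact that if $\mathcal{A} \subseteq \mathcal{B}$ are nonempty affine subspaces of a finite-dimensional vector space with $\dim \mathcal{A} = \dim \mathcal{B}$, then $\mathcal{A} = \mathcal{B}$: picking a point $q \in \mathcal{A}$, the underlying linear subspaces $\mathcal{A} - q$ and $\mathcal{B} - q$ are nested with equal finite dimension, hence equal, so $\mathcal{A} = q + (\mathcal{A}-q) = q + (\mathcal{B}-q) = \mathcal{B}$. Applying this with $\mathcal{A} = \Ycal_\Lcal(x_0)$ and $\mathcal{B} = \Ycal(x_0)$ yields $\Ycal_\Lcal(x_0) = \Ycal(x_0)$, and Proposition \ref{prop:global_optimality} then gives optimal global performance. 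I do not anticipate a genuine obstacle here; the only points needing care are confirming that hypothesis (1) is precisely what licenses the use of Lemma \ref{lemm:ylset_formulation1} (and the nonemptiness of $\Ycal_\Lcal(x_0)$), and noting that $\rank(Z_h X(I-\Fi F)) = N_uT$ is in fact the maximal value consistent with the inclusion, so that equality of dimensions is the right condition.
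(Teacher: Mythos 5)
Your proposal is correct and follows essentially the same route as the paper: establish $\Ycal_\Lcal(x_0) \subseteq \Ycal(x_0)$, use Lemmas \ref{lemm:yset_formulation1} and \ref{lemm:ylset_formulation1} to show both sets have dimension $N_uT$ under the stated hypotheses (hence coincide), and then apply Proposition \ref{prop:global_optimality}. Your write-up is simply a more explicit version of the paper's argument, spelling out the nested-affine-subspaces-of-equal-dimension fact that the paper leaves implicit.
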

\begin{proof}
    By definition, $\Ycal_\Lcal(x_0) \subseteq \Ycal(x_0)$. Equality is achieved if and only if the two sets are of equal size. Applying Lemmas \ref{lemm:yset_formulation1} and \ref{lemm:ylset_formulation1} shows that the conditions of the theorem are necessary and sufficient for $\Ycal_\Lcal(x_0)$ and $\Ycal(x_0)$ to be equal. Then, apply Proposition \ref{prop:global_optimality} for the desired result.
\end{proof}

This theorem provides a criterion to assess how the inclusion of locality constraints affects the trajectory set. It also allows us to gain some intuition on the effect of these constraints, which are represented by the matrix $F$. If no locality constraints are included, then $F$ has rank 0; in this case, $\rank(Z_h X (I-\Fi F)) = \rank(Z_h X) = N_uT$, via Lemma \ref{lemm:yset_formulation1}. The rank of $F$ increases as the number of locality constraints increases; this results in decreased rank for $I-\Fi F$, and possibly also decreased rank for $Z_h X (I-\Fi F)$. However, due to the repetitive structure of $Z_h X$, this is not always the case --- it is possible to have locality constraints that do not lead to decreased rank for $Z_h X (I-\Fi F)$. We provide a detailed numerical example of this in Section \ref{sec:numerical_example}.

Unfortunately, checking the conditions of Theorem \ref{thm:equal_trajsets_1} is computationally expensive. In particular, we must assemble and compute the rank of matrix $Z_h X (I-\Fi F)$. The complexity of this operation is dependent on the size of the matrix, which increases as $\Phibf$ becomes more sparse (as enforced by locality constraints). This is a problem, since it is generally preferable to use very sparse $\Phibf$, as previously mentioned --- this would correspond to an extremely large matrix $Z_h X (I-\Fi F)$, which is time-consuming to compute with. Ideally, sparser $\Phibf$ should instead correspond to \textit{lower} complexity; this is the motivation for the next formulation.

\subsection{LOCALITY-FIRST FORMULATION} \label{sec:formulation2}
We first parameterize \eqref{eq:constr2}, then \eqref{eq:constr1}. This directly gives a closed-form expression for localized trajectory set $\Ycal_\Lcal(x_0)$. First, some definitions:

\begin{definition}
\textit{Support vector indices} $\Mfrak$ denote the set of indices of $\Phivec$ such that $(\Phivec)_\Mfrak \neq 0$ is compatible with locality constraint \eqref{eq:constr2}. Let $\nM$ be the cardinality of $\Mfrak$.
\end{definition}

\textit{Remark:} this is complementary to Definition \ref{defn:loc_constrained_indices}. Instead of looking at which indices are constrained to be zero, we now look at which indices are allowed to be nonzero. A subtlety is that this definition considers the entirety of $\Phibf$, while Definition \ref{defn:loc_constrained_indices} omits the first $N_x$ rows of $\Phibf$.

\begin{lemma} \label{lemm:ylset_formulation2}
\textit{(Image space representation of localized trajectory set)} Assume there exists some $\Phibf$ that satisfies constraints \eqref{eq:constr1} and \eqref{eq:constr2}. Then, the localized trajectory set from state $x_0$ is described by the following:
\begin{equation*}
\begin{aligned}
\Ycal_\Lcal(x_0) = \{ \ybf: \quad & \ybf = (X_2)_{:,\Mfrak} \Hi k \quad + \\ & (X_2)_{:,\Mfrak}(I - \Hi H) \gamma, \quad \gamma \in \Rbb^{\nM} \}
\end{aligned}
\end{equation*}

\begin{equation*}
\begin{aligned}
\text{where} \quad X_2 & := (X)_{N_x+1:,:} \\
\text{and} \quad H & := (\Zb)_{:,\Mfrak} \quad \text{and} \quad k := \vv{\IOBlock}  
\end{aligned}
\end{equation*}
and the size of the localized trajectory set is
\begin{equation*}
\dim(\Ycal_\Lcal(x_0)) = \rank((X_2)_{:,\Mfrak}(I - \Hi H))    
\end{equation*}
\end{lemma}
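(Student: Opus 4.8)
The plan is to mirror the dynamics-first derivation (Lemmas~\ref{lemm:yset_formulation1}--\ref{lemm:ylset_formulation1}) but to apply the two parameterizations in the opposite order: first encode the locality constraint \eqref{eq:constr2} by restricting the support of $\vv{\Phibf}$ to $\Mfrak$, and only then impose the dynamics constraint \eqref{eq:constr1} as a linear equation in the surviving entries. \textbf{Step 1 (encode locality).} By the definition of $\Mfrak$, the constraint $\Phibf \in \Lcal$ is exactly the statement that $\vv{\Phibf}$ is supported on $\Mfrak$, i.e. $\vv{\Phibf} = P\gamma$, where $P$ is the column-selection matrix whose columns are the standard basis vectors indexed by $\Mfrak$ and $\gamma \in \Rbb^{\nM}$ is free. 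Unlike $\Lfrak$ in Definition~\ref{defn:loc_constrained_indices}, here $\Mfrak$ indexes all of $\Phibf$, including its first $N_x$ rows; this is harmless because the diagonal support of the identity block is compatible with every locality constraint (so those indices simply belong to $\Mfrak$), and the dynamics constraint will pin the corresponding entries to $I$ automatically, as already observed after \eqref{eq:dynamics_constr_xu}.

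\textbf{Step 2 (impose dynamics).} Vectorizing \eqref{eq:constr1} and using the augmented-matrix identity $\vv{\Z\Phibf} = \Zb\vv{\Phibf}$ gives $\Zb\vv{\Phibf} = \vv{\IOBlock} = k$. Substituting $\vv{\Phibf} = P\gamma$ and noting $\Zb P = (\Zb)_{:,\Mfrak} = H$ turns this into $H\gamma = k$. By the standing hypothesis some $\Phibf$ satisfies both \eqref{eq:constr1} and \eqref{eq:constr2}, so this linear system is consistent; hence its solution set is $\{\gamma = \Hi k + (I - \Hi H)\nu : \nu \in \Rbb^{\nM}\}$, where $I - \Hi H$ is the orthogonal projector onto $\ker H$ and $\nu$ sweeps all of $\Rbb^{\nM}$.

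\textbf{Step 3 (recover the trajectory and its dimension).} Using the identity $\Phibf x_0 = X\vv{\Phibf}$ and keeping rows $N_x{+}1$ onward, $\ybf = (\Phibf)_{N_x+1:,:}x_0 = X_2\vv{\Phibf} = X_2 P\gamma = (X_2)_{:,\Mfrak}\gamma$. Plugging in the parameterization of $\gamma$ from Step~2 gives $\ybf = (X_2)_{:,\Mfrak}\Hi k + (X_2)_{:,\Mfrak}(I - \Hi H)\nu$; renaming the free variable $\nu$ as $\gamma$ yields the claimed description of $\Ycal_\Lcal(x_0)$. Since the first term is a fixed offset and $\nu$ ranges over all of $\Rbb^{\nM}$, this affine set equals $(X_2)_{:,\Mfrak}\Hi k + \image\big((X_2)_{:,\Mfrak}(I-\Hi H)\big)$, so $\dim(\Ycal_\Lcal(x_0)) = \rank\big((X_2)_{:,\Mfrak}(I-\Hi H)\big)$.

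I do not expect a deep obstacle here — the argument is essentially bookkeeping — but the point needing the most care is the column-selection algebra of Steps~1--2: verifying that restricting the support of $\vv{\Phibf}$ to $\Mfrak$ and then right-multiplying $\Zb$ and $X_2$ by $P$ reproduces exactly $H$ and $(X_2)_{:,\Mfrak}$, together with the subtlety flagged in the remark preceding the lemma, namely that $\Mfrak$ spans the full $\Phibf$ while $\Lfrak$ omits its first $N_x$ rows, so one must confirm that retaining those rows in the parameterization neither over- nor under-constrains the trajectory set. As in Lemma~\ref{lemm:ylset_formulation1}, consistency of $H\gamma = k$ must be invoked from the assumption that a feasible $\Phibf$ exists, since otherwise $\Hi k$ need not solve the system.
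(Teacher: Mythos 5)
Your proposal is correct and follows essentially the same route as the paper's proof: restrict the support of $\vv{\Phibf}$ to $\Mfrak$ (the paper works with $(\Phivec)_\Mfrak$ directly rather than via a selection matrix $P$, but this is the same step), reduce the dynamics constraint to $H(\Phivec)_\Mfrak = k$, invoke the feasibility assumption to parameterize solutions as $\Hi k + (I-\Hi H)\gamma$, and substitute into $\ybf = (X_2)_{:,\Mfrak}(\Phivec)_\Mfrak$ to read off the set and its dimension. The additional care you take with the first $N_x$ rows matches the remark the paper makes in the proof of Lemma \ref{lemm:ylset_formulation1}, so nothing is missing.
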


\begin{proof}
Future trajectory $\ybf$ can be written as 
\begin{equation} \label{eq:traj_sparse}
\ybf = X_2 \Phivec = (X_2)_{:,\Mfrak} (\Phivec)_\Mfrak
\end{equation}
where the first equality arises from the definitions of $\ybf$ and $X$, and the second equality arises from the fact that zeros in $\Phivec$ do not contribute to $\ybf$; thus, we only need to consider nonzero values $(\Phivec)_\Mfrak$.

Using the augmented matrix of $\Z$, constraint \eqref{eq:constr1} can be rewritten as $\Zb\Phivec = k$. Nonzero values $(\Phivec)_\Mfrak$ must obey 
\begin{equation} \label{eq:locality_feasibility_2}
    H (\Phivec)_\Mfrak = k
\end{equation}

Constraint \eqref{eq:locality_feasibility_2} is feasible exactly when constraints \eqref{eq:constr1} and \eqref{eq:constr2} are feasible. By assumption, solutions exist, so we can parameterize the solution space as
\begin{equation} \label{eq:soln_space_2}
    (\Phivec)_\Mfrak = \Hi k + (I - \Hi H) \gamma
\end{equation}
where $\gamma$ is a free variable. Substituting \eqref{eq:soln_space_2} into \eqref{eq:traj_sparse} gives the desired expression for $\Ycal_\Lcal(x_0)$ and its size.
\end{proof}

\begin{theorem} \label{thm:equal_trajsets_2}
    \textit{(Optimal global performance)} If $x_0$ has at least one nonzero value, then localized MPC attains optimal global performance if
    \begin{enumerate}
    \item there exists some $\Phibf$ that satisfies constraints \eqref{eq:constr1} and \eqref{eq:constr2}, and
    \item $\rank((X_2)_{:,\Mfrak}(I - \Hi H)) = N_uT$
    \end{enumerate}
\end{theorem}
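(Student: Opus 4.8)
The plan is to mirror the proof of Theorem \ref{thm:equal_trajsets_1} almost verbatim, but using the locality-first parametrization (Lemma \ref{lemm:ylset_formulation2}) in place of the dynamics-first one. First I would note, as in Theorem \ref{thm:equal_trajsets_1}, that $\Ycal_\Lcal(x_0) \subseteq \Ycal(x_0)$ holds trivially from the definitions, since any $\Phibf$ feasible for the localized problem is also feasible for the global problem. Because both sets are affine subspaces (images of affine maps, as exhibited in Lemmas \ref{lemm:yset_formulation1} and \ref{lemm:ylset_formulation2}), containment plus equality of dimension forces equality of the sets. Hence it suffices to show that under the two hypotheses the two dimensions coincide.

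Next I would invoke Lemma \ref{lemm:yset_formulation1}, part 2: since $x_0$ has a nonzero entry, $\dim(\Ycal(x_0)) = N_uT$. Then I would invoke Lemma \ref{lemm:ylset_formulation2}: the assumption that some $\Phibf$ satisfies \eqref{eq:constr1} and \eqref{eq:constr2} is exactly the feasibility hypothesis needed for that lemma to apply, and it yields $\dim(\Ycal_\Lcal(x_0)) = \rank((X_2)_{:,\Mfrak}(I - \Hi H))$. Condition 2 of the theorem states precisely that this rank equals $N_uT$, so $\dim(\Ycal_\Lcal(x_0)) = N_uT = \dim(\Ycal(x_0))$. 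Combined with the containment from the first paragraph, this gives $\Ycal_\Lcal(x_0) = \Ycal(x_0)$, and Proposition \ref{prop:global_optimality} then delivers optimal global performance.

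The only subtlety — and the one place I would be careful — is the bookkeeping discrepancy flagged in the remark before Lemma \ref{lemm:ylset_formulation2}: the support indices $\Mfrak$ range over all of $\Phibf$ (including the first $N_x$ rows, which are pinned to $I$), whereas $\Lfrak$ and the dynamics-first objects omit those rows. This is why the localized trajectory map in Lemma \ref{lemm:ylset_formulation2} is phrased through $X_2 = (X)_{N_x+1:,:}$ rather than through $\Phibf_2$ directly. But this is already internalized inside Lemma \ref{lemm:ylset_formulation2}, so at the level of this theorem no extra work is needed — the argument is a direct composition of the two lemmas with Proposition \ref{prop:global_optimality}. I do not anticipate a genuine obstacle here; the proof is essentially a one-line substitution of Lemma \ref{lemm:ylset_formulation2} for Lemma \ref{lemm:ylset_formulation1} in the earlier argument.

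\begin{proof}
By definition, $\Ycal_\Lcal(x_0) \subseteq \Ycal(x_0)$, since any $\Phibf$ feasible for the localized problem is feasible for the global problem. Both sets are affine subspaces, so equality holds if and only if they have the same dimension. By hypothesis 1, Lemma \ref{lemm:ylset_formulation2} applies and gives $\dim(\Ycal_\Lcal(x_0)) = \rank((X_2)_{:,\Mfrak}(I - \Hi H))$, which equals $N_uT$ by hypothesis 2. Since $x_0$ has a nonzero entry, Lemma \ref{lemm:yset_formulation1} gives $\dim(\Ycal(x_0)) = N_uT$ as well. Hence $\Ycal_\Lcal(x_0) = \Ycal(x_0)$, and Proposition \ref{prop:global_optimality} yields optimal global performance.
\end{proof}
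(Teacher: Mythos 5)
Your proof is correct and follows essentially the same route as the paper: the paper's proof simply states that it proceeds as in Theorem \ref{thm:equal_trajsets_1} with Lemma \ref{lemm:ylset_formulation2} substituted for Lemma \ref{lemm:ylset_formulation1}, which is exactly the containment-plus-equal-dimension argument combined with Proposition \ref{prop:global_optimality} that you spell out. No discrepancies to note.
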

\begin{proof}
    Similar to Theorem \ref{thm:equal_trajsets_1}; instead of applying Lemma \ref{lemm:ylset_formulation1}, apply Lemma \ref{lemm:ylset_formulation2}.
\end{proof}
    
To check the conditions of Theorem \ref{thm:equal_trajsets_2}, we must assemble and compute the rank of matrix $(X_2)_{:,\Mfrak}(I - \Hi H)$. The complexity of this operation is dependent on the size of this matrix, which decreases as $\Phibf$ becomes more sparse (as enforced by locality constraints). This is beneficial, since it is preferable to use very sparse $\Phibf$, which corresponds to a small matrix $(X_2)_{:,\Mfrak}(I - \Hi H)$ that is easy to compute with. This is in contrast with the previous formulation, in which sparser $\Phibf$ corresponded to increased complexity. However, from a theoretical standpoint, this formulation does not provide any intuition on the relationship between the trajectory set $\Ycal(x_0)$ and the localized trajectory set $\Ycal_\Lcal(x_0)$. 

For completeness, we now use the locality-first formulation to provide a closed-form expression of $\Ycal(x_0)$. The resulting expression is equivalent to --- though decidedly more convoluted than --- the expression in Lemma \ref{lemm:yset_formulation1}:

\begin{lemma} \label{lemm:yset_formulation2} \textit{(Image space representation of trajectory set)} The trajectory set from state $x_0$ is described by:
\begin{equation*}
    \begin{aligned}
    \Ycal(x_0) = \{ \ybf: \quad & \ybf = X_2\Zb k \quad + \\ & X_2 (I - \Z^{\text{blk} \dagger} \Zb) \gamma, \quad \gamma \in \Rbb^{\nPhi}\}
    \end{aligned}
\end{equation*} 
\end{lemma}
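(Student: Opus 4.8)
The plan is to obtain this lemma as the ``no-locality'' special case of Lemma \ref{lemm:ylset_formulation2}. When the locality constraint \eqref{eq:constr2} imposes no sparsity pattern at all, the requirement $\Phibf \in \Lcal$ is vacuous, so $\Ycal_\Lcal(x_0) = \Ycal(x_0)$ directly from the definitions. In that case the support vector index set is the full index set, $\Mfrak = \{1,\dots,\nPhi\}$ (interpreted as all indices of $\Phivec$), and the submatrix selections collapse: $(X_2)_{:,\Mfrak} = X_2$ and $H = (\Zb)_{:,\Mfrak} = \Zb$, while $k = \vv{\IOBlock}$ is unchanged. Substituting these into the conclusion of Lemma \ref{lemm:ylset_formulation2} produces
\begin{equation*}
\ybf = X_2 \Z^{\text{blk}\dagger} k + X_2\bigl(I - \Z^{\text{blk}\dagger}\Zb\bigr)\gamma, \qquad \gamma \in \Rbb^{\nPhi},
\end{equation*}
which is exactly the claimed image-space representation (with the particular-solution term read as $X_2\Z^{\text{blk}\dagger}k$).

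Before invoking Lemma \ref{lemm:ylset_formulation2} I would discharge its one hypothesis, namely that some $\Phibf$ satisfies both \eqref{eq:constr1} and \eqref{eq:constr2}. Here \eqref{eq:constr2} is vacuous, so this reduces to feasibility of $\Z\Phibf = \IOBlock$, which was already established earlier in the paper via the Rouch\'e--Capelli theorem using that $\Z$ has full row rank (identity blocks on its diagonal). Hence the hypothesis holds and the substitution above is legitimate.

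Equivalently---and this is really the same argument written out---I would vectorize $\ybf = X_2\Phivec$ (from the definitions of $\ybf$ and $X$), rewrite \eqref{eq:constr1} in augmented form as $\Zb\Phivec = k$, and note that since $\Z$ has full row rank so does $\Zb = \blkdiag(\Z,\dots,\Z)$; therefore $\Zb\Z^{\text{blk}\dagger} = I$, the general solution of $\Zb\Phivec = k$ is $\Phivec = \Z^{\text{blk}\dagger}k + (I - \Z^{\text{blk}\dagger}\Zb)\gamma$, and left-multiplying by $X_2$ yields the stated set. As in the proof of Lemma \ref{lemm:ylset_formulation2}, the block $\Phibf_1 = I$ needs no separate treatment, since its diagonal sparsity pattern is compatible with any locality constraint, in particular the trivial one.

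I do not expect a genuine obstacle: the statement is essentially a corollary of Lemma \ref{lemm:ylset_formulation2}. The only care required is bookkeeping---verifying that ``no locality'' is precisely $\Mfrak$ being the full index set, so that the submatrices $(X_2)_{:,\Mfrak}$ and $(\Zb)_{:,\Mfrak}$ collapse correctly---and being explicit that $\Z^{\text{blk}\dagger}$ denotes the Moore--Penrose pseudoinverse of $\Zb$ with $\Zb\Z^{\text{blk}\dagger} = I$, which is what makes $X_2\Z^{\text{blk}\dagger}k$ a valid particular trajectory.
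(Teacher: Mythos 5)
Your proposal is correct and matches the paper's own proof: the paper likewise obtains this lemma by specializing Lemma \ref{lemm:ylset_formulation2} to the no-locality case, where $\Mfrak$ is the full index set so that $(X_2)_{:,\Mfrak}=X_2$ and $H=\Zb$. Your extra steps (discharging the feasibility hypothesis via the Rouch\'e--Capelli argument and reading the particular-solution term as $X_2\Z^{\text{blk}\dagger}k$, which the lemma statement misprints as $X_2\Zb k$) are sound and only make the same argument more explicit.
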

\begin{proof}
In the absence of locality constraints, $\Mfrak$ includes all indices of $\Phivec$ since all entries are allowed to be nonzero. Here, $\nM = \nPhi$, $(X_2)_{:,\Mfrak} = X_2$,  $(\Phivec)_{:,\Mfrak} = \Phivec$, and $H = \Zb$. Substitute these into the expression in Lemma \ref{lemm:ylset_formulation2} to obtain the desired result.
\end{proof}

\textit{Remark:} By definition, $X_2\Zb k = Z_p x_0$ and $X_2 (I - \Z^{\text{blk} \dagger} \Zb) = Z_h X$. Substituting these quantities into Lemma \ref{lemm:yset_formulation2} recovers Lemma \ref{lemm:yset_formulation1}.

\subsection{NUMERICAL EXAMPLE} \label{sec:numerical_example}
To provide some intuition on the results from the previous subsections, we present a simple numerical example. We work with a system of three nodes in a chain interconnection, as shown in Fig. \ref{fig:example_system}. The system matrices are:
\begin{equation}
A = \begin{bmatrix} 1 & 2 & 0 \\ 3 & 4 & 5 \\ 0 & 6 & 7 \end{bmatrix}, B = \begin{bmatrix} 1 & 0 \\ 0 & 0 \\ 0 & 1\end{bmatrix}
\end{equation}

\begin{figure}
	\centering
	\includegraphics[width=0.6\columnwidth]{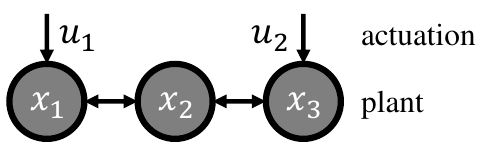}
	\caption{Example system with three nodes, two of which are actuated.}
	\label{fig:example_system}
\end{figure}

We set initial state $x_0 = \begin{bmatrix} 1 & 1 & 1 \end{bmatrix}^\top$, and choose a predictive horizon size of $T=1$. Here, $\nPhi = N_x(T+1) + N_uT = 8$. We choose locality constraints $\Lcal$ such that each node may only communicate with its immediate neighbors; node 1 with node 2, node 2 with both nodes 1 and 3, and node 3 with node 2. Then, locality constraint \eqref{eq:constr2} is equivalent to
\begin{equation}
\Phibf = \begin{bmatrix} \ast & \ast & 0 \\ \ast & \ast & \ast \\ 0 & \ast & \ast \\ \ast & \ast & 0 \\ \ast & \ast & \ast \\ 0 & \ast & \ast \\ \ast & \ast & 0 \\ 0 & \ast & \ast   
\end{bmatrix}
\end{equation}
where $\ast$ indicate values that are allowed to be nonzero. The support vector indices are $\Mfrak = \{ 1,2,4,5,7,9-16, 18, 19, 21, 22, 24\}$, and the constrained vector indices are $\Lfrak = \{3, 5, 11, 14\}$ (recall that indices in $\Lfrak$ do not include the first $N_x$ rows of $\Phibf$). We confirm that there exists some $\Phibf$ that satisfies both dynamics constraints \eqref{eq:constr1} and locality constraints \eqref{eq:constr2} by checking that constraint \eqref{eq:locality_feasibility_2} is feasible.

We start with dynamics-first formulation. In our case,
\begin{equation}
    Z_h = \begin{bmatrix} 0_{5 \times 3} & c_1 & 0_{5 \times 1} & c_2 & c_1 & c_2 \end{bmatrix}
\end{equation}
where $c_1$ and $c_2$ are defined as
\begin{equation}
    c_1 := \frac{1}{2} \begin{bmatrix} 1 \\ 0 \\ 0 \\ 1 \\ 0\end{bmatrix}, c_2 := \frac{1}{2} \begin{bmatrix} 0 \\ 0 \\ 1 \\ 0 \\ 1 \end{bmatrix}
\end{equation}
$Z_h$ has a rank of 2. Then, $Z_h X = \begin{bmatrix} Z_h & Z_h & Z_h \end{bmatrix}$, also has a rank of 2. Per Lemma \ref{lemm:yset_formulation1}, this is the size of the trajectory set $\Ycal(x_0)$, which is exactly equal to $N_uT$, as expected. We write out $Z_h X$ and $Z_h X (I - \Fi F)$ in full in equations \eqref{eq:example_method1_zhx} and \eqref{eq:example_method1_zhxiff}. Boxed columns represent columns zeroed out as a result of locality constraints, i.e. if we replace the boxed columns in $Z_hX$ with zeros, we obtain $Z_h X (I - \Fi F)$. In our example, $Z_h X (I - \Fi F)$ also has a rank of 2; by Theorem \ref{thm:equal_trajsets_1}, the local trajectory set is equal to the trajectory set, and by Theorem \ref{thm:equal_trajsets_1}, the localized MPC problem attains optimal global performance.

\begin{table*}
\begin{align} \label{eq:example_method1_zhx}
& Z_h X = 
\begin{bmatrix}
0_{5 \times 3} & c_1 & 0_{5 \times 1} & \boxed{c_2} & c_1 & \boxed{c_2} & 0_{5 \times 3} & c_1 & 0_{5 \times 1} & c_2 & c_1 & c_2 & 0_{5 \times 3} & \boxed{c_1} & 0_{5 \times 1} & c_2 & \boxed{c_1} & c_2
\end{bmatrix} \\ \label{eq:example_method1_zhxiff}
& Z_h X (I - \Fi F) = 
\begin{bmatrix}
0_{5 \times 3} & c_1 & 0_{5 \times 2} & c_1 & 0_{5 \times 4} & c_1 & 0_{5 \times 1} & c_2 & c_1 & c_2 & 0_{5 \times 5} & c_2 & 0_{5 \times 1} & c_2
\end{bmatrix} \\ \label{eq:example_method2}
& (X_2)_{:,\Mfrak}(I - \Hi H) =
\begin{bmatrix}
0_{5 \times 2} & c_1 & 0_{5 \times 1} & c_1 & 0_{5 \times 3} & c_1 & 0_{5 \times 1} & c_2 & c_1 & c_2 & 0_{5 \times 3} & c_2 & c_2
\end{bmatrix}
\end{align}
\end{table*}

Two observations are in order. First, we notice that the rank of $Z_hX$ (= 2) is low compared to the number of nonzero columns (= 12), especially when $x_0$ is dense. Additionally, the structure of $Z_hX$ is highly repetitive; the only two linearly independent columns are $c_1$ and $c_2$, and each appears 6 times in $Z_hX$. Furthermore, the specific values of $x_0$ do not affect the rank of these matrices --- only the placement of nonzeros and zeros in $x_0$ matters.

Second, notice that post-multiplying $Z_h X$ by $(I-\Fi F)$ effectively zeros out columns of $Z_h X$. However, due to the repetitive structure of $Z_h X$, this does not result in decreased rank for $Z_hX (I - \Fi F)$. In fact, it is difficult to find a feasible locality constraint that results in decreased rank. This observation is corroborated by simulations in Section \ref{sec:simulations}, in which we find that locality constraints that are feasible also typically preserve global performance. For more complex systems and larger predictive horizons, post-multiplication of $Z_h X$ by $(I-\Fi F)$ no longer cleanly corresponds to zeroing out columns, but similar intuition applies.

We now apply the locality-first formulation. To check the conditions of Theorem \ref{thm:equal_trajsets_2}, we must construct the matrix $(X_2)_{:,\Mfrak}(I - \Hi H)$ and check its rank. This matrix is written out in equation \eqref{eq:example_method2}. As expected, the rank of this matrix is also equal to two. Additionally, notice that $(X_2)_{:,\Mfrak}(I - \Hi H)$ contains the same nonzero columns as $Z_h X (I - \Fi F)$: $c_1$ and $c_2$ are each repeated four times, in slightly different orders. This is unsurprising, as the two formulations are equivalent.

\section{ALGORITHMIC IMPLEMENTATION OF OPTIMAL LOCALITY SELECTION} \label{sec:locality_selection}

Leveraging the results of the previous section, we introduce an algorithm that selects the appropriate locality constraints for localized MPC. For simplicity, we restrict ourselves to locality constraints corresponding to $d$-local neighborhoods, though we remark that Subroutine \ref{alg:construct_local_matrix} is applicable to arbitrary communication structures.

The localized MPC problem can be solved via distributed optimization techniques; the resulting distributed and localized MPC problem enjoys complexity that scales with locality parameter $d$, as opposed to network size $N$ \cite{amoalonso_dlmpc1_journal}. Thus, when possible, it is preferable to use small values of $d$ to minimize computational complexity. For a given system and predictive horizon length, Algorithm \ref{alg:main} will return the \textit{optimal locality size} $d$ --- the smallest value of $d$ that attains optimal global performance.

As previously described, the specific values of $x_0$ do not matter --- only the placement of nonzeros and zeros in $x_0$ matters. We will restrict ourselves to considering dense values of $x_0$. For simplicity, our algorithm will work with the vector of ones as $x_0$ --- the resulting performance guarantees hold for \textit{any} dense $x_0$.

To check if a given locality constraint preserves global performance, we must check the two conditions of Theorem \ref{thm:equal_trajsets_2}. First, we must check whether there exists some $\Phibf$ that satisfies both dynamics and locality constraints; this is equivalent to checking whether \eqref{eq:locality_feasibility_2} is feasible. We propose to check whether

\begin{equation} \label{eq:feasibility_check}
\| H(\Hi k) - k \|_\infty \leq \epsilon
\end{equation}
for some tolerance $\epsilon$. Condition \eqref{eq:feasibility_check} can be distributedly computed due to the block-diagonal structure of $H$. Define partitions $[H]_i$ such that $H = \blkdiag([H]_1, [H]_2 \ldots [H]_{N})$\footnote{$H$ should have $N_x$ blocks, where $N_x$ is the number of states. Since $N \leq N_x$ and one subsystem may contain more than one state, we are able to partition $H$ into $N$ blocks as well.}. Then, \eqref{eq:feasibility_check} is equivalent to
\begin{equation} \label{eq:feasibility_check_local}
\| [H]_i([H]_i^\dagger [k]_i) - [k]_i \|_\infty \leq \epsilon \quad \forall i
\end{equation}

If this condition is satisfied, then it remains to check the second condition of Theorem \ref{thm:equal_trajsets_2}. To so, we must construct matrix $J := (X_2)_{:,\Mfrak}(I - \Hi H)$ and check its rank. Notice that $J$ can be partitioned into submatrices $J_i$, i.e. $J := \begin{bmatrix} J_1 & J_2 & \ldots & J_N \end{bmatrix}$, where each block $J_i$ can be constructed using only information from subsystem $i$, i.e. $[H]_i$, $[k]_i$, etc. Thus, $J$ can be constructed in parallel --- each subsystem $i$ performs Subroutine \ref{alg:construct_local_matrix} to construct $J_i$.

\begin{algorithm}[ht]
\floatname{algorithm}{Subroutine}
\caption{Local sub-matrix for subsystem $i$} \label{alg:construct_local_matrix}
\begin{algorithmic}[1]
\Statex \textbf{inputs:} $[H]_i$, $[k]_i$, $\epsilon$ \Statex \textbf{output:} $J_i$
\State Compute $w = [H]_i^\dagger [k]_i$
\State \textbf{if} $\|[H]_i w - [k]_i \|_\infty > \epsilon$ \textbf{:}
\Statex \quad $J_i \leftarrow \textbf{False}$
\Statex \textbf{else :}
\Statex \quad $J_i \leftarrow I - [H]_i^\dagger [H]_i$
\Statex \textbf{return}
\end{algorithmic}
\end{algorithm}

Subroutine \ref{alg:construct_local_matrix} checks whether the dynamics and locality constraints are feasible by checking \eqref{eq:feasibility_check_local}, and if so, returns the appropriate submatrix $J_i$. Notice that the quantity $[H]_i$ is used in both the feasibility check and in $J_i$. Also, $x_0$ does not appear, as we are using the vector of ones in its place.

Having obtained $J$ corresponding to a given locality constraint, we need to check its rank to verify whether global performance is preserved, i.e. $\rank(J) = N_uT$, as per Theorem \ref{thm:equal_trajsets_2}. As previously described, we restrict ourselves to locality constraints of the $d$-local neighborhood type, preferring smaller values of $d$ as these correspond to lower complexity for the localized MPC algorithm \cite{amoalonso_dlmpc1_journal}. Thus, in Algorithm \ref{alg:main}, we start with the smallest possible value of $d=1$, i.e. subsystems communicate only with their immediate neighbors. If $d=1$ does not preserve global performance, we iteratively increment $d$, construct $J$, and check its rank, until optimal global performance is attained. 

\addtocounter{algorithm}{-1}
\begin{algorithm}[ht]
\caption{Optimal local region size} \label{alg:main}
\begin{algorithmic}[1]
\Statex \textbf{inputs:} $A$, $B$, $T$, $\epsilon$
\Statex \textbf{output:} $d$
\Statex \textbf{for} $d = 1 \ldots N$ \textbf{:}
\State \quad \textbf{for} $i = 1 \ldots N$ \textbf{:}
\Statex \quad \quad Construct $[H]_i$, $[k]_i$
\Statex \quad \quad Run Subroutine \ref{alg:construct_local_matrix} to obtain $J_i$
\Statex \quad \quad \textbf{if} $J_i$ is \textbf{False} \textbf{:}
\Statex \quad \quad \quad \textbf{continue} 
\State \quad Construct $J = \begin{bmatrix} J_1 & \ldots & J_N \end{bmatrix}$
\State \quad \textbf{if} $\rank(J) = N_uT$ \textbf{:}
\Statex \quad \quad \textbf{return} \textit{d}
\end{algorithmic}
\end{algorithm}

In Step 1 of Algorithm \ref{alg:main}, we call Subroutine \ref{alg:construct_local_matrix} to check for feasibility and construct submatrices $J_i$. If infeasibility is encountered, or if optimal global performance is not attained, we increment $d$; otherwise, we return optimal locality size $d$.

\subsection{COMPLEXITY}
To analyze complexity, we first make some simplifying scaling assumptions. Assume that the number of states $N_x$ and inputs $N_u$ are proportional to the number of subsystems $N$, i.e. $O(N_x+N_u) = O(N)$. Also, assume that the number of nonzeros $\nM$ for locality constraint corresponding to parameter $d$ are related to one another as $O(\nM) = O(NdT)$.

Steps 1 and 3 determine the complexity of the algorithm. In Step 1, each subsystem performs operations on matrix $[H]_i$, which has size of approximately $N_x(T+1)$ by $dT$ --- the complexity will vary depending on the underlying implementations of the pseudoinverse and matrix manipulations, but will generally scale according to 
$O((dT)^2N_xT)$, or $O(T^3d^2N)$. In practice, $d$ and $T$ are typically much smaller than $N$, and this step is extremely fast; we show this in the next section. 

In Step 3, we perform a rank computation on a matrix of size $(N_x+N_u)T$ by $\nM$. The complexity of this operation, if Gaussian elimination is used, is $O((N_x+N_u)^{1.38}T^{1.38}\nM)$, or $O(T^{2.38}N^{2.38}d)$. Some speedups can be attained by using techniques from \cite{cheung2013_MatrixRank}, which leverage sparsity --- typically, $J$ is quite sparse, with less than 5\% of its entries being nonzero. In practice, Step 3 is the dominating step in terms of complexity.

We remark that this algorithm needs only to be run once offline for any given localized MPC problem. Given a system and predictive horizon, practitioners should first determine the optimal locality size $d$ using Algorithm \ref{alg:main}, then run the appropriate online algorithm from \cite{amoalonso_dlmpc1_journal}.

\section{SIMULATIONS} \label{sec:simulations}

We first present simulations to supplement runtime characterizations of Algorithm \ref{alg:main} from the previous section. Then, we use the algorithm to investigate how optimal locality size varies depending on system size, actuation density, and prediction horizon length. We find that optimal locality size is primarily a function of actuation density. We also verify in simulation that localized MPC performs identically to global MPC when we use the optimal locality size provided by Algorithm \ref{alg:main}, as expected. Code needed to replicate all simulations can be found at \url{https://github.com/flyingpeach/LocalizedMPCPerformance}. This code makes use of the SLS-MATLAB toolbox \cite{li2019_SLSMatlab}, which includes an implementation of Algorithm \ref{alg:main}.

\textit{Note on MATLAB implementation}: When implementing subroutine \ref{alg:construct_local_matrix} in MATLAB, use of the backslash operator (i.e. \texttt{H\textbackslash k}) is faster than the standard pseudoinverse function (i.e. \texttt{pinv(H)*k}). The backslash operator also produces $J_i$ matrices that are as sparse as possible, which facilitates faster subsequent computations.

\subsection{SYSTEM AND PARAMETERS}
We start with a two-dimensional $n$ by $n$ square mesh. Every pair of neighboring nodes have a 40\% probability to be connected by an edge. The expected number of edges is $0.8 \times n \times (n-1)$. We consider only connected graphs\footnote{\changed{Disconnected system graphs imply some amount of dynamical decoupling, which makes distributed control easier than the connected case. We consider only the harder case in simulations --- naturally, our method will do just as well on the easier case.}}. \changed{An example is given in Fig. \ref{fig:grid_topology}. The underlying assumption of this random procedure is that each node has some physical location, and edges are only formed between nodes that are physically nearby --- many other real-world systems (e.g. transport systems, water canals) exhibit similar connectivity and sparsity patterns.} Each node $i$ represents a two-state subsystem of linearized and discretized swing equations
\begin{subequations}
\begin{equation}
\begin{bmatrix} \theta(t+1) \\ \omega (t+1) \end{bmatrix}_i = \sum_{j \in \Ncal_1 (i)} [A]_{ij} \begin{bmatrix} \theta(t) \\ \omega (t) \end{bmatrix}_j + [B]_i [u]_i
\end{equation}
\begin{equation}
\begin{aligned}
\relax [A]_{ii} & = \begin{bmatrix}
    1 & \Delta t \\ -\frac{k_i}{m_i}\Delta t & 1 - \frac{d_i}{m_i}\Delta t
\end{bmatrix}, \\
[A]_{ij} & = \begin{bmatrix}
    0 & 0 \\ \frac{k_{ij}}{m_i} \Delta t & 0
\end{bmatrix}, 
[B]_i = \begin{bmatrix}
    0 \\ 1
\end{bmatrix}
\end{aligned}
\end{equation}
\end{subequations}
where $[\theta]_i$, $[\omega]_i$, and $[u]_i$ are the phase angle deviation, frequency deviation, and control action of the controllable load of bus $i$. Parameters are $m_i^{-1}$ (inertia), $d_i$ (damping), and $k_{ij}$ (coupling); they are randomly drawn from uniform distributions over $[0, 2]$, $[1, 1.5]$ and $[0.5, 1]$ respectively. Self term $k_i$ is defined as $k_i := \sum_{j \in \Ncal_1(i)} k_{ij}$. Discretization step size $\Delta t$ is $0.2$.

\begin{figure}
	\centering
	\includegraphics[width=0.5\columnwidth]{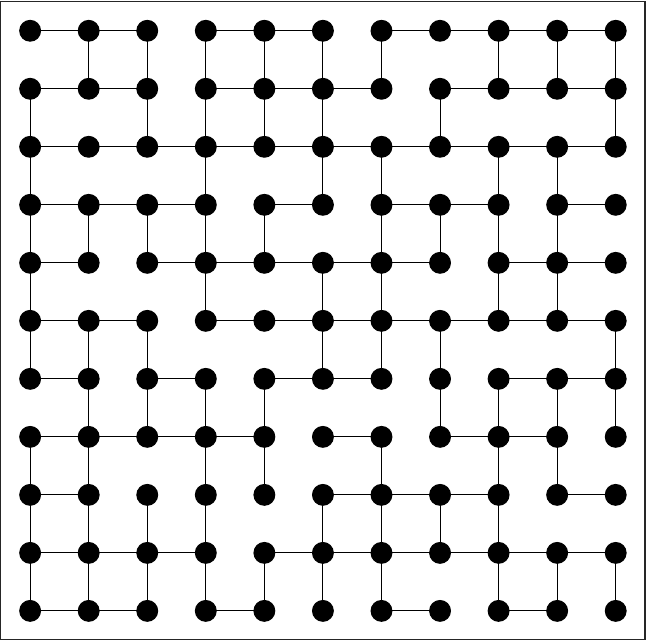}
	\caption{\changed{An example of an 11 by 11 grid topology generated using the described random procedure. Each node on the grid is associated with two states (angle and frequency).}}
	\label{fig:grid_topology}
\end{figure}

Under the given parameter ranges, the system is typically neutrally stable, with a spectral radius of 1. The baseline system parameters are $n=5$ (corresponding to a $5 \times 5$ grid containing 25 nodes, or 50 states) and 100\% actuation. Note: this does not correspond to `full actuation' in the standard sense; it means that each subsystem (which contains 2 states) has one actuator --- 50\% of states are actuated. We use a prediction horizon length of $T=15$ unless otherwise stated.

\subsection{ALGORITHM RUNTIME}

We plot the runtime of Algorithm \ref{alg:main} in Fig. \ref{fig:algorithm_runtime} for different system sizes and horizon lengths. We separately consider runtimes for matrix construction (Step 1) and rank determination (Step 3). The former is parallelized, while the latter is not. 

Runtime for matrix construction is extremely small. Even for the grid with 121 subsystems (242 states), this step takes less than a millisecond. Interestingly, matrix construction runtime also stays relatively constant with increasing network size, despite the worst-case runtime scaling linearly with $N$, as described in the previous section. This is likely due to the sparse structure of $H$. Conversely, matrix construction runtime increases with increasing horizon length.

Runtime for rank determination dominates total algorithm runtime, and increases with both system size and horizon length. Rank determination runtime appears to increase more sharply with increasing horizon length than with increasing system size. Further runtime reductions may be achieved by taking advantage of techniques described in the previous section; however, even without additional speedups, the runtime is no more than 10 seconds for the grid with 242 states.

\begin{figure}
	\centering
	\includegraphics[width=\columnwidth]{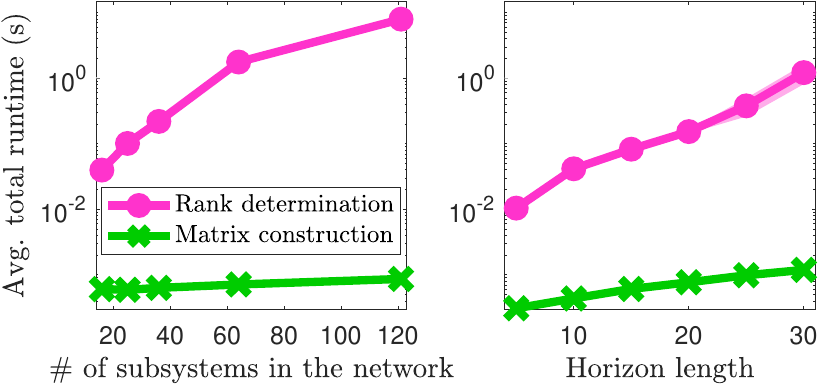}
	\caption{Runtime of matrix construction (Step 1, green) and rank determination (Step 3, pink) of Algorithm \ref{alg:main} vs. network size and horizon length. Parallelized (i.e. per-subsystem) runtimes are shown for matrix construction. The algorithm was run for grids containing 16, 25, 36, 64, and 121 subsystems. For each point, we run the algorithm on five different systems, and plot the average and standard deviation --- here, the standard deviation is so small that it is barely visible. As expected, the rank determination step dominates total runtime, while the matrix construction step is extremely fast.}
	\label{fig:algorithm_runtime}
\end{figure}

\subsection{OPTIMAL LOCALITY SIZE AS A FUNCTION OF SYSTEM PARAMETERS}

We characterize how optimal locality size changes as a function of the system size and horizon length --- the results are summarized in Fig. \ref{fig:parameter_effects}. Actuation density is the main factor that affects optimal locality size. Remarkably, at 100\% actuation, the optimal locality size always appears to be $d=1$, the smallest possible size (i.e. communication only occurs between nodes that share an edge). As we decrease actuation density, the required optimal locality size increases. This makes sense, as unactuated nodes must communicate to at least the nearest actuated node, and the distance to the nearest actuated node grows as actuation density decreases.

The optimal locality size also increases as a function of system size --- but only when we do not have 100\% actuation. At 60\% actuation, for 121 nodes, the optimal locality size is around $d=5$; this still corresponds to much less communication than global MPC.

Predictive horizon length does not substantially impact optimal locality size. At short horizon lengths ($T \leq 10$), we see some small correlation, but otherwise, optimal locality size stays constant with horizon size. Similarly, the stability of the system (i.e. spectral radius) appears to not affect optimal locality size; this was confirmed with simulations over systems with spectral radius of 0.5, 1.0, 1.5, 2.0, and 2.5 for 60\%, 80\%, and 100\% actuation (not included in the plots).

\begin{figure}
	\centering
	\includegraphics[width=\columnwidth]{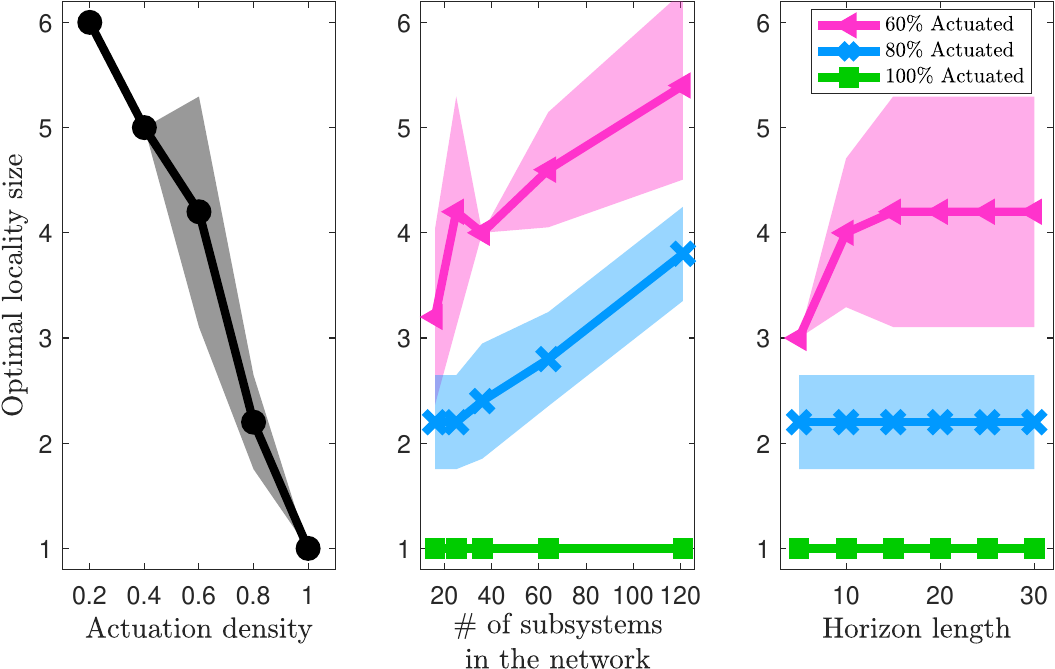}	
	\caption{Optimal locality size as a function of various parameters. Each point represents the average over five different systems; standard deviations are shown by the fill area. \textit{\textbf{(Left)}} Optimal locality size vs. actuation density. The two are inversely correlated. \textit{\textbf{(Center)}} Optimal locality size vs. network size for 60\% actuation (pink), 80\% actuation (blue), and 100\% actuation (green). For 60\% and 80\% actuation, optimal locality size roughly increases with network size. For 100\% actuation, the optimal locality size is always 1, independent of network size. \textit{\textbf{(Right)}} Optimal locality size vs. predictive horizon length for 60\% actuation (pink), 80\% actuation (blue), and 100\% actuation (green). For 60\% and 80\% actuation, optimal locality size increases with horizon size up until $T=10$, then stays constant afterward. For 100\% actuation, the optimal locality size is always 1.}
	\label{fig:parameter_effects}
\end{figure}

\subsection{LOCALIZED PERFORMANCE}

From the previous section, we found that with 100\% actuation, the optimal locality size is always 1. This means that even in systems with 121 subsystems, each node need only communicate with its immediate neighbors (i.e. 4 or less other nodes) to attain optimal global performance. This is somewhat surprising, as this is a drastic (roughly 30-fold) reduction in communication compared to global MPC. To confirm this result, we ran simulations on 20 different systems of size $N=121$ and 100\% actuation. We use LQR objectives with random positive diagonal matrices $Q$ and $R$, and state bounds $\theta_i \in [-4, 4]$ for phase states and $\omega_i \in [-20, 20]$ for frequency states. We use random initial conditions where each value $(x_0)_i$ is drawn from a uniform distribution over $[-2, 2]$.

For each system, we run localized MPC with $d=1$, then global MPC, and compare their costs over a simulation of 20 timesteps. Over 20 simulations, we find the maximum cost difference between localized and global MPC to be 5.6e-6. Thus, we confirm that the reported optimal locality size is accurate, since the cost of localized and global MPC are nearly identical. \changed{As previously described, localized MPC enjoys additional scalability benefits. In this particular case, the size of each subproblem is very small, since it involves at most 10 states\footnote{\changed{Each subproblem consists of up to 5 subsystems: the self subsystems plus a maximum of 4 neighbors. Each subsystem has 2 states}}. This is the size of the subproblem for \textit{any} system size --- if there are 10 nodes, each node will solve a 10-state subproblem in parallel; if there are 100, or even 1000 nodes, each node will still solve a 10-state subproblem in parallel, and the resulting runtime and complexity will be the same \cite{amoalonso_dlmpc1_journal}. This is in contrast with standard (i.e. global) MPC, in which the complexity will increase with system size.}

\subsection{FEASIBILITY VS. OPTIMALITY OF LOCALITY CONSTRAINTS}

In Algorithm \ref{alg:main}, a given locality size is determined to be suboptimal if (1) the locality constraints are infeasible, or (2) the locality constraints are feasible, but matrix $J$ has insufficient rank. The example from Section \ref{sec:numerical_example} suggests that the second case is rare --- to further investigate, we performed 200 random simulations, in which all parameters were randomly selected from uniform distributions --- grid size $N$ from $[4, 11]$ (corresponding to system sizes of up to 121 subsystems), actuation density from $[0.2, 1.0]$, spectral radius from $[0.5, 2.5]$ and horizon length from $[3, 20]$. In these 200 simulations, we encountered 4 instances where a locality constraint was feasible but resulted in insufficient rank; in the vast majority of cases, if a locality constraint was feasible, the rank condition was satisfied as well. 

\section{CONCLUSIONS} \label{sec:conclusions}

In this work, we provided analysis and guarantees on locality constraints and global performance. We presented lemmas, theorems, and an algorithm to certify optimal global performance --- these are the first results of their kind, to the best of our knowledge. We then leveraged these theoretical results to provide an algorithm that determines the optimal locality constraints that will expedite computation while preserving the performance --- this is the first exact method to compute the optimal locality parameter $d$ for DLMPC.

Several directions of future work may be explored:
\begin{enumerate}
    \item The results in this work can be leveraged to investigate the relationship between network topology and optimal locality constraints, i.e. the strictest communication constraints that still preserve optimal global performance. Certain topologies may require long-distance communication between handful of nodes; others may require no long-distance communications. A more thorough characterization will help us understand the properties of systems that are suited for localized MPC.
    \item Algorithm \ref{alg:main} considers $d$-local communication constraints; a natural extension is to consider non-uniform local communication constraints, which are supported by the theory presented in this work. A key challenge of this research direction is the combinatorial nature of available local communication configurations; insights from the research direction suggested above will likely help narrow down said set of configurations.
    \item Simulations suggest that feasibility of a given locality constraint overwhelmingly coincides with optimal global performance. This poses the question of whether feasibility of a given locality constraint is \textit{sufficent} for $J$ to be full rank under certain conditions, and what these conditions may be. Additional investigation could reveal more efficient implementations of Algorithm \ref{alg:main}, as bypassing the rank checking step would save a substantial amount of computation time.
    \item This work focuses on nominal trajectories. Additional investigation is required to characterize the impact of locality constraints on trajectories robust to disturbances. For polytopic disturbances, the space (or minima) of available values of $\mathbf{\Xi}g$ from (10) in \cite{amoalonso_dlmpc1_journal} is of interest. Due to the additional variables in the robust MPC problem, we cannot directly reuse techniques from this paper, though similar ideas may be applicable. 
\end{enumerate}
\bibliography{refs}
\bibliographystyle{IEEEtran}

\end{document}